\newtheorem{definition}{Definition}[section]
\newtheorem{theorem}[definition]{Theorem}
\newtheorem{proposition}[definition]{Proposition}
\newtheorem{corollary}[definition]{Corollary}
\newtheorem{remark}[definition]{Remark}
\numberwithin{equation}{section}
\newcommand \Hbf {\mathcal H}
\newcommand \nablabf {\boldsymbol \nabla}
\newcommand \ra \rangle
\newcommand \la \langle
\newcommand \dotn {\dot n}
\newcommand \del        \partial
\newcommand \be   {\begin{equation}}
\newcommand \ee   {\end{equation}}
\newcommand \bfF {\mathbf F}
\def\M{\mathcal{M}}
\def\Q{\mathcal{Q}}
\def\L{\mathcal{L}}
\def\RR     {\mathbb R}
\def\Mcal  {{\mathcal M}^{3+1}}
\def\bg{\mathbf{g}}
\def\bR{\mathbf{R}}
\def\bW{\mathbf{W}}
\def\bN{\mathbf{N}}
\def\bD{\mathbf{D}}
\def\bRic{\mathbf{Ric}}
\def\bm{\mathbf{m}}
\def\bu{\mathbf{u}}
\def\bk{\mathbf{k}}
\def\bB{\mathbf{B}}
\def\bE{\mathbf{E}}
\def\bH{\mathbf{H}}
\def\bP{\mathbf{P}}
\def\bfe{\mathbf{e}}
\newcommand\g{\ensuremath{\mathbf{g}}}
\newcommand\T{\ensuremath{\mathbf{T}}}
\newcommand\gN{\ensuremath{\mathbf{g}_{\bN}}}
\newcommand{\bel}[1]{\be\label{#1}}
\def\a{\alpha}
\def\b{\beta}
\def\d{\delta}
\def\e{\epsilon}
\def\eps{\varepsilon}
\def\l{\lambda}
\def\m{\mu}
\def\tr{\operatorname{tr}}
\def\id{\operatorname{id}}
\def\diag{\operatorname{diag}}
\newcommand\bfpi{\ensuremath{\boldsymbol{\pi}}}
\DeclareMathOperator{\grad}{grad}
\begin{document}

\title[New energy inequalities on spacetimes that satisfy a one-sided bound]{
New energy inequalities for tensorial wave equations 
\\
on spacetimes that satisfy a one-sided bound}
%                     Spelling:   "LeFloch"   or   "LeFLOCH"
\author[A.Y. Burtscher, J.D.E. Grant, and P.G. L{\tiny e}Floch]
{Annegret Y. Burtscher$^{1,3}$, \,  James D.E. Grant$^2$,
\and
Philippe G. L{\smaller e}Floch$^3$
}

\date{}

\thanks{\noindent{}$^1$  Fakult\"{a}t f\"{u}r Mathematik, Universit\"{a}t Wien, Nordbergstra\ss e 15, 1090 Wien, Austria.
%E-mail: {\tt Annegret.Burtscher@univie.ac.at.}
\\
Web: {\tt annegretburtscher.wordpress.com}
\\
$^2$ Gravitationsphysik, Fakult\"{a}t f\"{u}r Physik, W\"{a}hringer Stra{\ss}e 17,
Universit\"{a}t Wien, 1090 Wien, Austria.
%E-mail: {\tt James.Grant@univie.ac.at.}
\\
Web: {\tt jdegrant.wordpress.com}
\\
$^3$ Laboratoire Jacques-Louis Lions, Centre National de la Recherche Scientifique,
Universit\'e Pierre et Marie Curie (Paris 6), 4 Place Jussieu, 75252 Paris,
France. 
Blog: {\tt philippelefloch.org}
\\
 Completed in April 2011. To appear in {\tt Communications in Partial Differential Equations.} 
%\newline Revised in September 2011.
}

\begin{abstract}
We consider several tensorial wave equations, specifically the equations of
Max\-well, Yang--Mills, and Weyl fields, posed on a curved spacetime,
and we establish new energy inequalities under certain {\sl one-sided geometric conditions.}
Our conditions restrict the lapse function and deformation tensor of the spacetime foliation,
and turn out to be a {\sl one-sided and integral} generalization of conditions recently proposed by
Klainerman and Rodnianski as providing
a continuation criterion for Einstein's field equations of general relativity.
As we observe it here for the first time, one-sided conditions are sufficient to derive energy inequalities
for certain tensorial equations, provided one takes advantage of some algebraic properties enjoyed by the
natural energy functionals associated with the equations under consideration.
Our method especially applies to the Bel--Robinson energy for Weyl fields, and
our inequalities control the {\sl growth} of the energy in a uniform way, with implied constants depending
on the one-sided geometric bounds, only.
\end{abstract}
\maketitle
% \thispagestyle{empty}

%\vfill

%\tableofcontents

%\newpage

%===================================================================

\section{Introduction}

We consider several tensorial wave equations and, specifically, the equations satisfied by
Maxwell, Yang--Mills and Weyl fields, posed on a curved spacetime.
Given a foliation of this spacetime by spacelike hypersurfaces,
we derive here new energy inequalities based on the observation that
certain one-sided bounds on the geometry of this foliation are sufficient to control the energy functionals
associated with the tensorial equations under consideration.

Our general purpose here is to derive estimates that depend less on norms of geometric quantities
(like the curvature) and more directly on properties such as, for example, the
eigenvalue spectrum of these geometric quantities. In differential
geometry, one often finds that, for instance, manifolds with an upper curvature
bound behave in a different way from manifolds with a lower curvature bound.
Similarly, submanifolds of Riemannian manifolds with positive mean
curvature are quite different from those with negative mean curvature. From a
geometrical point of view, therefore, much information on the manifold is lost by simply
considering the norm of a curvature quantity.

In the present paper, we identify certain minimal geometric conditions, specifically concerning
the lapse function and deformation tensor (defined below) of the spacetime foliation;
interestingly, these conditions turn out to be a {\sl one-sided and integral} generalization of a condition
proposed recently by Klainerman and Rodnianski as a continuation criterion for the Einstein equations
of general relativity~\cite{KR} (see also~\cite{Wang}).
Under these conditions and by taking advantage of
certain algebraic properties satisfied by the natural energy functionals
associated with the tensorial wave equations under consideration,
we are able to establish new energy inequalities leading to a uniform
control of the {\sl growth\/} of the energy,
with implied constants depending on one-sided bounds of the lapse and deformation tensor, only.
Our method especially applies to the Bel--Robinson energy associated with Weyl fields,
which arises in the analysis of Einstein's field equations.
We can thus identify the optimal geometric conditions that
are necessary and sufficient to derive energy inequalities for Weyl fields on curved spacetimes.
It should be recalled here that this tensor plays a central role in the mathematical analysis of the Einstein equations as illustrated by works of
Andersson \cite{Andersson},
Andersson and Moncrief \cite{AM},
Christodoulou and Klainerman \cite{CK},
Klainerman and Nicol\' o \cite{KN},
Klainerman and Rodnianski \cite{KR}, and Reiris \cite{Reiris}.

In addition, in this paper, we demonstrate that our methods may be applied to other classes of tensorial wave equations
and, specifically, we establish new energy inequalities for Maxwell fields and Yang--Mills fields with compact gauge group.
In contrast and somewhat surprisingly, we find that no such result is available for scalar fields satisfying the
(minimally or conformally coupled) wave equation. Our results have some implications in general relativity, as we discuss at the end of this paper.

Throughout this paper, we use the following notation.
The manifold $(\Mcal, \g)$ is a four-dimensional, globally hyperbolic spacetime,
foliated by hypersurfaces $\Hbf_t$ labelled by a time parameter $t \in I$,
where $I$ is an interval of the real line $\RR$.
We denote by $\bD$ the spacetime Levi-Civita connection associated with the metric $\g$, and by
$\bR$ its Riemann curvature tensor.
For definiteness, we assume that the hypersurfaces of the foliation are compact or asymptotically
flat. We denote by $\bN$ the future directed, time-like
unit normal to each slice $\Hbf_t$, and by $\g_t$ the induced
Riemannian metric on $\Hbf_t$ (i.e.~the first fundamental
form). As this is convenient, we do not
distinguish, notationally, between the $(0, 2)$ and $(2, 0)$ form of the
Lorentzian metric, denoting both by $\g$. Using this convention, the {\sl lapse} and {\sl second fundamental form} of the foliation are defined by
\be
\label{nk}
n:=(-\g( dt, dt))^{-1/2}, \qquad \bk(X,Y) := -\g(\bD_X \bN, Y),
\ee
for all vector fields $X, Y$ tangent to $\Hbf_t$.  Next, we define the {\sl deformation tensor} of the foliation by
\be
\label{deftens}
\bfpi := \L_{\bN} \, \g,
\ee
where $\L$ denotes the Lie derivative operator.
Finally, denoting by $\nablabf$ the
Levi-Civita connection of the slices $(\Hbf_t, \g_t)$, it is straightforward to show that
\be
\label{301}
\bfpi(\bN, \bN) = 0, \qquad \bfpi(\bN, X) = \nablabf_X \log n, \qquad \bfpi(X, Y)
= - 2 \, \bk(X, Y),
\ee
in which $X, Y$ are arbitrary vector fields orthogonal to the normal $\bN$.
Our main condition below will be stated in terms of $\bfpi$ and $n$, but it will
be convenient to also involve $\bk$ in our calculations.

We emphasize that all solutions to the field equations under consideration are assumed to be sufficiently
regular and have sufficient decay at spacelike infinity, so that all calculations below are justified.

An outline of this paper follows. Section~\ref{section2} is concerned with our main result about
the Bel--Robinson energy of  Weyl fields. In Section~\ref{section3}, we generalize this analysis
to Maxwell and Yang--Mills fields and discuss the case of scalar fields.
Finally, in Section~\ref{section4}, we discuss a class of spacetimes and state a conjecture for the Einstein equations.

%=================================================================

\section{Weyl fields on one-sided bounded spacetimes}
\label{section2}

\subsection{Properties of the Bel--Robinson tensor}

In all of this section (except for the final subsection), we assume that the spacetime $(\Mcal, \bg)$ is Ricci-flat, that is, in
components
\be
\label{ricciflat}
R_{\alpha\beta} = 0,
\ee
and we analyze the Bel--Robinson energy associated to the spacetime curvature $\bR$. At the end of this section,
we will explain how this analysis actually allows us to encompass Weyl fields on a given background.
In \eqref{ricciflat} and throughout, all Greek indices $\alpha, \beta,\ldots$ take values $0,\ldots, 3$ while 
Latin indices $i,j, \ldots$ take values $1,2,3$.

The following discussion depends upon the prescription of a future-oriented, time-like vector field
which, without restriction at this stage, is taken to coincide with the normal
$\bN$ to the foliation (introduced in Section~1).
In an orthonormal frame $\{\bfe_0,\bfe_1,\bfe_2,\bfe_3 \}$ with $\bfe_0=\bN$,
whose dual basis is denoted by $\{\bfe^0,\bfe^1,\bfe^2,\bfe^3\}$, the Lorentzian metric reads
$$
\g = \eta_{\a\b} \, \bfe^\a \otimes \bfe^\b := - \bfe^0 \otimes \bfe^0 + \bfe^1 \otimes \bfe^1
+ \bfe^2 \otimes \bfe^2 + \bfe^3 \otimes \bfe^3,
$$
while the volume form reads
$$
dV_\g = \bfe^0 \wedge \bfe^1 \wedge \bfe^2 \wedge \bfe^3 = \frac{1}{24} \,
\e_{\a\b\gamma\d} \, \bfe^\a \wedge \bfe^\b
\wedge \bfe^{\gamma} \wedge \bfe^\d.
$$
Here, as usual, we have set
$$
\e_{\a\b\gamma\d} = \left\{ \begin{array}{ll}
                        1, \quad & \a\b\gamma\d~\text{ even permutation of}~0123,
 \\
			   \hskip-.28cm  -1, & \a\b\gamma\d~\text{ odd permutation of}~0123,
\\
			    0, & \text{otherwise,}
                           \end{array} \right.
$$
and, for short, $\e_{ijk} := \e_{0ijk}$.

The left-- and right--hand {\sl Hodge duals} of the curvature tensor $\bR$, by definition, are
$$
{^\star R}_{\a\b\gamma\d} := \frac{1}{2} \e_{\a\b}{}^{\l\m} R_{\l\m\gamma\d},
\qquad
\quad R^\star{}_{\a\b\gamma\d} := \frac{1}{2} \e_{\gamma\d}{}^{\l\m} R_{\a\b\l\m},
$$
respectively. We recall that if the metric $\g$ is Einstein
(i.e.~the Ricci tensor $\bRic$ is proportional to the metric $\g$),
then ${^\star \bR}$ and ${\bR^\star}$ coincide.
In addition, when $\g$ is Ricci-flat as we assume in this section (see \eqref{ricciflat}),
then the tensors ${^\star\bR}$ and ${\bR^\star}$ possess the full
symmetries of the curvature tensor $\bR$.
(For details, see for instance~\cite[Sec.\ 4.6]{Penrose}.)

From the curvature tensor and its dual, we may then construct its {\sl electric and magnetic parts}
with respect to the vector field $\bN$, i.e.
$$
\bE(X,Y) := \g \big( \bR(X,\bN)\bN,Y \big),
\qquad \bH(X,Y) := \g({^\star\bR}\big( X,\bN)\bN,Y \big),
$$
respectively.
By the (skew) symmetries of the curvature tensor $\bR$ it follows that $\bE$ and
$\bH$ are symmetric and tangent to the foliation hypersurfaces $\Hbf_t$,
i.e.
$$
\bE(\bN, \cdot) = 0, \qquad \bH(\bN, \cdot) = 0.
$$
Moreover, since the metric is Ricci-flat, both $\bE$ and $\bH$ are
trace-free, in the sense that
$\delta^{ij} \bE(\bfe_i, \bfe_j) = \delta^{ij} \bH(\bfe_i, \bfe_j) = 0$.

To derive certain relations between the quantities $\bR$, ${^\star \bR}$, $\bE$, and $\bH$, 
it is convenient to define the {\sl reference Riemannian metric,}
\be
\label{999}
\gN := \g + 2 \, \g(\bN, \cdot) \otimes \g(\bN, \cdot),
\ee
associated with the Lorentzian metric and the given vector field.
This Riemannian metric determines the corresponding norms $\left| \, \cdot \, \right|_{\gN}$
and inner products $\langle \, \cdot \, , \, \cdot \, \rangle_{\gN}$
on tensor bundles over $\Mcal$.

\begin{proposition}[Decomposition of the Riemann tensor]
\label{REH}
Relative to an orthonormal frame $\{ \bfe_0, \bfe_1, \bfe_2, \bfe_3 \}$ with
$\bfe_0 = \bN$,
one has
\be
\label{eq1}
\aligned
& R_{ijk0} = - \e_{ijm} H_{km},\qquad \qquad \quad
 {^\star R}_{ijk0} = \e_{ijm} E_{km},
\\
& R_{ijkl} = - \e_{ijm} \e_{kln} E_{mn},
\qquad \quad
 {^\star R}_{ijkl} = -\e_{ijm} \e_{kln} H_{mn}.
\endaligned
\ee
Moreover, with respect to the Lorentzian metric $\g$ and the Riemannian metric $\gN$ one obtains
\begin{align}
\label{200}
|\bR|^2 &= 8 \, \big( |\bE|^2 - |\bH|^2\big),
 & |\bR|_{\gN}^2 = 8 \, \big( |\bE|^2 + |\bH|^2 \big),
\end{align}
respectively.
\end{proposition}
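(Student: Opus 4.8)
The plan is to work throughout in the orthonormal frame $\{\bfe_0, \ldots, \bfe_3\}$ with $\bfe_0 = \bN$, and to organize the components of $\bR$ by the number $p$ of indices equal to $0$. Because $\bR$ is antisymmetric in each of its two index pairs, a nonvanishing component has $p \le 2$, and $p = 2$ forces exactly one zero in each pair; so the content of $\bR$ is carried by the purely spatial components $R_{ijkl}$ ($p = 0$), the mixed components $R_{ijk0}$ together with their images under pair-swap and pair-antisymmetry ($p = 1$), and the components $R_{i0j0}$ ($p = 2$), which reproduce the electric part $\bE$. The same split applies to ${}^\star\bR$, which — since $\g$ is Ricci-flat — shares all the symmetries of $\bR$ and whose $p = 2$ block is $\bH$.

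The only genuine computation needed is the reduction of the spacetime Levi-Civita symbol to its spatial analogue. For spatial $i, j$ one checks, using $\eta^{00} = -1$, that $\e_{ij}{}^{\l\m}$ vanishes unless $\{\l,\m\}$ consists of $0$ and the remaining spatial index $m$, in which case $\e_{ij}{}^{0m} = -\e_{ijm}$ and $\e_{ij}{}^{m0} = +\e_{ijm}$. Substituting this into the definition of the left Hodge dual gives the compact identity ${}^\star R_{ij\gamma\d} = \e_{ijm}\, R_{m0\gamma\d}$, valid for spatial $i,j$ and arbitrary $\gamma,\d$. Setting $\gamma\d = k0$ yields ${}^\star R_{ijk0} = \e_{ijm} E_{km}$ at once. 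Applying the same compact identity to ${}^\star\bR$ in place of $\bR$ and invoking the standard double-dual relation ${}^\star({}^\star\bR) = -\bR$ for $2$-forms on a Lorentzian $4$-manifold, one obtains $R_{ij\gamma\d} = -\e_{ijm}\, ({}^\star R)_{m0\gamma\d}$; taking $\gamma\d = k0$ then gives $R_{ijk0} = -\e_{ijm} H_{km}$. Finally, with $\gamma\d = kl$ these two relations, combined with the pair-swap symmetry (here Ricci-flatness is essential, so that ${}^\star\bR$ is symmetric under pair exchange), give $R_{ijkl} = -\e_{ijm}\e_{kln} E_{mn}$ and ${}^\star R_{ijkl} = -\e_{ijm}\e_{kln} H_{mn}$, establishing \eqref{eq1}; the precise signs are of course tied to the orientation and sign conventions already fixed above.

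For the norm identities \eqref{200}, I would expand $|\bR|^2 = R_{\a\b\gamma\d} R^{\a\b\gamma\d}$ and partition the sum by $p$: antisymmetry kills all terms with $p \ge 3$ as well as the $p = 2$ terms having both zeros in one pair, and raising each $0$-index against $\eta$ contributes a factor $\eta^{00} = -1$, so the $p$-block carries the overall sign $(-1)^p$. Inserting \eqref{eq1} and collapsing the products of spatial Levi-Civita symbols via $\e_{ijm}\e_{ijn} = 2\,\delta_{mn}$, the $p = 0$ and $p = 2$ blocks (both built from $\bE$) sum to $8\,|\bE|^2$ and the $p = 1$ block (built from $\bH$) to $8\,|\bH|^2$, whence $|\bR|^2 = 8\big(|\bE|^2 - |\bH|^2\big)$. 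For $|\bR|_{\gN}^2$ one observes that in this frame $\gN = \diag(1,1,1,1)$ is simply the Euclidean metric \eqref{999}, so all index raisings are trivial and each $p$-block now contributes with sign $+1$; only the magnetic $p = 1$ block flips sign, giving $|\bR|_{\gN}^2 = 8\big(|\bE|^2 + |\bH|^2\big)$. I do not anticipate any conceptual difficulty here: the only point requiring care is the bookkeeping — counting correctly the number of index placements in each $p$-block and tracking the $\eta^{00}$ signs consistently — and it is precisely that sign which accounts for the difference between the two norms.
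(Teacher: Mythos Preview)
Your argument is correct and follows essentially the same approach as the paper: both reduce the spacetime Hodge dual on the first index pair to the spatial Levi--Civita symbol, use the double-dual relation $\star\star=-\id$ together with the pair-exchange symmetry of ${}^\star\bR$ (valid by Ricci-flatness), and then expand the norm by the number of $0$-indices. Your packaging via the single identity ${}^\star R_{ij\gamma\d}=\e_{ijm}R_{m0\gamma\d}$ and the systematic $p$-block bookkeeping is slightly more streamlined, but the content is the same.
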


A proof of this result is provided shortly below.
To the Riemann curvature tensor, we can associate a notion of energy,
i.e.~the \text{\sl Bel--Robinson tensor}
$$
Q_{\a\b\gamma\d} := Q[\bR]_{\a\b\gamma\d} := R_{\a\l\gamma\m}
R_\b{}^\l{}_\d{}^\m +
{^\star R}_{\a\l\gamma\m} {^\star R}_\b{}^\l{}_\d{}^\m.
$$
It is straightforward to check that this tensor field is totally symmetric and trace-free.
Moreover, the components of $Q[\bR]$ are related to the electric
and magnetic parts of the curvature, as observed in \cite{AM,CK1,CK}.
These relations are conveniently expressed by setting
$$
A \cdot B = A_{\a\b} B^{\a\b},
\qquad
(A \wedge B)_\a = \e_\a{}^{\b\gamma} A_\b{}^\d B_{\d\gamma},
$$
and
$$
\aligned
(A \times B)_{\a\b}
= \, 
& \e_\a{}^{\gamma\d} \e_\b{}^{\m\nu} A_{\gamma\m} B_{\d\nu} + \frac{1}{3} (A
\cdot B)
g_{\a\b} - \frac{1}{3} (\tr A)(\tr B) g_{\a\b}
\\
= \, 
& A_\a{}^{\gamma} B_{\gamma\b} + A_\b{}^{\gamma} B_{\gamma\a} - \frac{2}{3} (A
\cdot B) \, g_{\a\b}
+ \frac{2}{3} (\tr A) (\tr B) \, g_{\a\b} - (\tr A)B_{\a\b} - (\tr B)A_{\a\b}.
\endaligned
$$
Observe that the $\times$ operation is symmetric.

\begin{proposition}[Decomposition of the Bel--Robinson tensor]
\label{REH-two}
Relative to an orthonormal frame $\{ \bfe_0, \bfe_1, \bfe_2, \bfe_3 \}$ with
$\bfe_0 = \bN$,
the relevant components of the Bel--Robinson tensor take the form
\be
\label{Q1}
\aligned
& Q_{0000} = |\bE|^2 + |\bH|^2, \qquad
Q_{i000} = 2 (\bE \wedge \bH)_i,
\\
& Q_{ij00} = - (\bE \times \bE)_{ij} - (\bH \times \bH)_{ij}
     + \frac{1}{3} \big( |\bE|^2 + |\bH|^2 \big) \, g_{ij}.
\endaligned
\ee
\end{proposition}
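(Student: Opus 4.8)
The plan is to substitute the frame components of $\bR$ and ${^\star\bR}$ supplied by Proposition~\ref{REH} directly into the definition
$$
Q_{\a\b\gamma\d} = R_{\a\l\gamma\m} R_\b{}^\l{}_\d{}^\m + {^\star R}_{\a\l\gamma\m} {^\star R}_\b{}^\l{}_\d{}^\m,
$$
working throughout in the orthonormal frame $\{\bfe_0,\bfe_1,\bfe_2,\bfe_3\}$ with $\bfe_0=\bN$. In this frame spatial indices are raised and lowered with $\delta_{ij}$, while a single raised time index contributes a factor $\eta_{00}=-1$, so every contraction collapses to a Euclidean sum over $1,2,3$ once these signs are tracked. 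I would also record at the outset that, by the very definitions of the electric and magnetic parts, $R_{0i0j}$ and ${^\star R}_{0i0j}$ equal $E_{ij}$ and $H_{ij}$ up to a fixed overall sign; together with the four identities of Proposition~\ref{REH} this expresses \emph{all} frame components of $\bR$ and ${^\star\bR}$ through $\bE$, $\bH$, and the symbol $\e_{ijk}$.

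For each of the three required components I would first use the antisymmetries $R_{\a\b\gamma\d}=-R_{\b\a\gamma\d}=-R_{\a\b\d\gamma}$ to restrict the dummy indices $\l,\m$. In $Q_{0000}$ both factors force $\l,\m$ spatial, leaving $\sum_{a,b}(R_{0a0b})^2 + \sum_{a,b}({^\star R}_{0a0b})^2 = |\bE|^2 + |\bH|^2$. In $Q_{i000}$ the second factor kills all terms with $\l=0$ or $\m=0$, so only $\sum_{a,b}R_{ib0a}R_{0b0a}$ and its dual analogue survive; substituting $R_{ib0a}=\pm\e_{ibm}H_{am}$, $R_{0b0a}=\pm E_{ab}$ and likewise for ${^\star R}$, the single leftover $\e$-symbol pairs with $(EH)_{bm}$ to form $(\bE\wedge\bH)_i$, and the $R$- and ${^\star R}$-pieces produce the same term, whence $2(\bE\wedge\bH)_i$. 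In $Q_{ij00}$ the dummies are again forced spatial and one reduces to $\sum_{a,b}R_{iajb}R_{0a0b}$ plus its dual; here $R_{iajb}=-\e_{iam}\e_{jbn}E_{mn}$ and $R_{0a0b}=\pm E_{ab}$ contribute a product of two $\e$-symbols, and the dual piece the same with $\bH$ in place of $\bE$.

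To finish $Q_{ij00}$ I would contract the pair of $\e$-symbols using the determinant identity for $\e_{iam}\e_{jbn}$ (equivalently $\e_{ijm}\e_{ijn}=2\delta_{mn}$ and $\e_{iam}\e_{jan}=\delta_{ij}\delta_{mn}-\delta_{in}\delta_{jm}$), then compare with the explicit expansion of $A\times B$ recorded just before the statement. Since $\bE$ and $\bH$ are trace-free in the Ricci-flat case, the $\tr A,\tr B$ terms in that expansion vanish, so $(\bE\times\bE)_{ij}=\e_{iam}\e_{jbn}E_{mn}E_{ab}+\tfrac13|\bE|^2\delta_{ij}$ with $\bE\cdot\bE=|\bE|^2$, and similarly for $\bH$. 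Reading this backwards identifies the $\bE$-quadratic part of $Q_{ij00}$ with $-(\bE\times\bE)_{ij}+\tfrac13|\bE|^2\delta_{ij}$ and the $\bH$-quadratic part with $-(\bH\times\bH)_{ij}+\tfrac13|\bH|^2\delta_{ij}$; adding them and using $\delta_{ij}=g_{ij}$ gives the claimed formula.

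The individual computations are elementary; the only genuine difficulty is sign bookkeeping — the factors $\tfrac12\e_{\a\b}{}^{\l\m}$ in the Hodge duals, the $\eta_{00}=-1$ from the raised time index, and the several $\e$--$\e$ contraction identities must all be threaded consistently. The qualitative point to keep in view is that in each component the $\bR$-contribution and the ${^\star\bR}$-contribution enter \emph{with the same sign}, so the Bel--Robinson components emerge as \emph{sums} ($|\bE|^2+|\bH|^2$, a doubled $\bE\wedge\bH$, and $(\bE\times\bE)+(\bH\times\bH)$) — in contrast with the Lorentzian identity $|\bR|^2=8(|\bE|^2-|\bH|^2)$ of Proposition~\ref{REH}, where the relative sign is reversed — and that the truly mixed $\bE$--$\bH$ terms cancel in $Q_{0000}$ and $Q_{ij00}$ but reinforce in $Q_{i000}$. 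Checking that this comes out consistently is the crux.
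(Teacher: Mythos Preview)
Your approach is correct and is essentially the same as the paper's: direct substitution of the frame components of $\bR$ and ${^\star\bR}$ from Proposition~\ref{REH} into the definition of $Q$. The paper's proof in fact only carries out the $Q_{0000}$ case explicitly and then declares that ``all the other terms can be computed in a similar fashion,'' so your sketch is already more detailed than what appears there.

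One small point worth making explicit: in your treatment of $Q_{ij00}$ you write that the dummies are forced spatial and that the expression reduces to $\sum_{a,b}R_{iajb}R_{0a0b}$ plus its dual. Strictly from the definition, $Q_{ij00}=R_{i\lambda 0\mu}R_j{}^{\lambda}{}_0{}^{\mu}+\cdots$, and here $\lambda=0$ is \emph{not} excluded by antisymmetry (the term $R_{i00b}R_j{}^{0}{}_0{}^{b}$ survives and contributes $-E_{ib}E_{jb}$). What you have actually written is $Q_{i0j0}$, which equals $Q_{ij00}$ by the total symmetry of $Q$ stated just before the proposition; with that index arrangement the second factor $R_0{}^{\lambda}{}_0{}^{\mu}$ does kill $\lambda=0$ and $\mu=0$, and your computation goes through exactly as you describe. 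So the argument is right, but you should say that you are invoking the total symmetry at that step. (Relatedly, in the $Q_{i0j0}$ route there are no mixed $\bE$--$\bH$ terms to cancel: the $\bR$-piece is purely quadratic in $\bE$ and the ${^\star\bR}$-piece purely quadratic in $\bH$, so your closing remark about cancellation of mixed terms applies to the direct $Q_{ij00}$ computation rather than to the one you sketched.)
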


\begin{proof}[Proof of Propositions~\ref{REH} and \ref{REH-two}]
To derive~\eqref{eq1} we observe that, by definition,
$$
\aligned
{^\star R}_{ijk0}
&= \frac{1}{2} \e_{ij}{}^{\m\nu} R_{\m\nu k0}
\\
& = \e_{ij}{}^{0m} R_{0mk0} = (- \e_{ijm})
(- R_{m0k0}) = \e_{ijm} E_{mk}.
\endaligned
$$
The property $\star \star = - \id$ implies
$$
\aligned
R_{ijk0} & = - {^\star ({^\star R}_{ijk0})} = - {^\star (\e_{ijm} R_{m0k0})}
\\
& = - \e_{ijm} {^\star R_{m0k0}}
= - \e_{ijm} H_{mk}.
\endaligned
$$
The identities in the first line of~\eqref{eq1} then imply those in the second line, more precisely 
$$
\aligned
{^\star R_{ijkl}}
& = \frac{1}{2} \e_{ij}{}^{\m\nu} R_{\m \nu kl} = - \e_{ijm} R_{0mkl}
\\
& = \e_{ijm} R_{klm0}
= - \e_{ijm} \e_{kln} H_{mn} 
\endaligned
$$
and 
$$
\aligned
R_{ijkl} & = - {^\star ({^\star R}_{ijkl})} = - {^\star (- \e_{ijm} R_{0mkl})}
\\
& = - \e_{ijm} {^\star R_{klm0}}
= - \e_{ijm} \e_{kln} E_{mn}. 
\endaligned
$$ 

Due to the symmetries of the Riemann tensor, the above quantities suffice to
derive
$$
\aligned
|\bR|^2 & = R_{\a\b\gamma\d} R^{\a\b\gamma\d}
 = 4 R_{i0j0} R^{i0j0} + 4 R_{ijk0} R^{ijk0} + R_{ijkl} R^{ijkl}
\\
&= 4 E_{ij} E^{ij} - 4 \e_{ijm} \e_{ijn} H_{km} H^{kn} + \e_{ijm} \e_{kln}
\e_{ijo} \e_{klp} E_{mn} E^{op},
\endaligned
$$
thus
$$
\aligned
|\bR|^2
&= 4 E_{ij} E^{ij} - 8 H_{km} H^{km} + 4 E_{mn} E^{mn}
\\
&= 8 \, \big( |\bE|^2 - |\bH|^2 \big).
\endaligned
$$
On the other hand, to handle $|\bR|_{\gN}^2$, we observe that, in the calculation above,
we do not get a factor of $-1$ if we lower the $0$ index on the
second term, and this then changes the sign of the term
$|\bH|^2$. This completes the derivation of \eqref{200}.

Next, considering the Bel--Robinson tensor $Q_{\a\b\gamma\d}$, we immediately deduce from the definition that
$$
\aligned
Q_{0000} &= R_{0i0j} R_0{}^i{}_0{}^j + {^\star R}_{0i0j} {^\star
R}_0{}^i{}_0{}^j
\\
&= E_{ij} E^{ij} + H_{ij} H^{ij} = |\bE|^2 + |\bH|^2,
\endaligned
$$
while all the other terms can be computed in a similar fashion.
\end{proof}

%----------------------------------------------------------------------------------------

\subsection{Bel--Robinson energy inequality under a one-sided bound}

On a Ricci-flat spacetime, the Bel--Robinson tensor has the divergence-free property
\be
\label{dive}
\bD^\a Q_{\a\b\gamma\d} = 0,
\ee
as observed in Penrose and Rindler~\cite[Sec.\ 4.10]{Penrose}
and Christodoulou and Klainerman~\cite[Proposition~7.1.1]{CK}.
This property suggests that by integration one should be
able to control the Bel--Robinson energy
on an arbitrary slice of the foliation in terms of its values on some ``initial'' slice.
More precisely, we introduce the {\sl total Bel--Robinson energy} at time $t$ as
$$
\Q[\bR]_{\Hbf_t} := \int_{\Hbf_t} Q[\bR](\bN,\bN,\bN,\bN) \, dV_{\g_t},
$$
and we now derive a uniform estimate for $\Q_{\Hbf_t} $ which
solely involves a one-sided bound on the deformation tensor of the foliation.

First, we fix some notation and, by adopting an orthonormal frame with $\mathbf{e}_0 = \bN$, we express the deformation tensor defined in \eqref{nk}--\eqref{301}
in the form
\bel{picpts}
n \bfpi = \left( \begin{array}{cc}
                     0 & \nablabf  n \\
		     \nablabf n & - 2 n \bk
                    \end{array}
\right).
\ee
We also define the auxiliary tensor field
$$
\Lambda(n\bk) := (\tr (n\bk)) \, \id_3 - 2 n \bk,
$$
and the cubic polynomial of the real variable $\lambda$
\bel{poly}
P_{n \bfpi}(\lambda) := \det \big( \lambda \, \id_3 - \Lambda(n \bk) \big)
+ \big( \Lambda(n \bk) - \lambda \, \id_3 \big) \, \left( \nablabf n, \nablabf n \right),
\ee
referred to as the {\sl critical polynomial} of the spacetime foliation. This polynomial will arise naturally from the expression of the Bel--Robinson tensor; see \eqref{g-polynomial}, below. 

\begin{definition}
\label{rho}
The largest real root of the polynomial $P_{n \bfpi}$
is called the {\rm critical root} of $n \bfpi$ and denoted by $\rho(n \bfpi)$.
\end{definition}

\begin{theorem}[Bel--Robinson energy inequality under a one-sided bound]
\label{BR-estimate}
Given any vacuum Einstein spacetime endowed with a foliation $(\Hbf_t)_{t \in I}$ with lapse function $n$ and deformation tensor $\bfpi$,
one has
$$
\Q[\bR]_{\Hbf_{t_2}} \leq e^{3 K_{n \bfpi}(t_1,t_2)} \, \Q[\bR]_{\Hbf_{t_1}}, \qquad t_1 \leq t_2
$$
for all $t_1, t_2 \in I$, where
$$
K_{n \bfpi}(t_1,t_2): = \int_{t_1}^{t_2} \sup_{\Hbf_t} \rho(n \bfpi)  \, dt.
$$
\end{theorem}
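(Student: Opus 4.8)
The proof has four ingredients: an energy identity coming from the divergence-free property \eqref{dive}; a computation of the resulting flux density in terms of the electric and magnetic parts $\bE,\bH$ via Propositions~\ref{REH}--\ref{REH-two}; a \emph{pointwise} algebraic bound of that density by $\rho(n\bfpi)\,(|\bE|^2+|\bH|^2)$; and Gr\"onwall's inequality. The only real difficulty is the third. For the first step, work in an adapted orthonormal frame with $\bfe_0=\bN$ and set $P_\alpha:=Q_{\alpha\beta\gamma\delta}N^\beta N^\gamma N^\delta$. By total symmetry of $Q$, $\bD^\alpha P_\alpha=(\bD^\alpha Q_{\alpha\beta\gamma\delta})N^\beta N^\gamma N^\delta+3\,Q_{\alpha\beta\gamma\delta}(\bD^\alpha N^\beta)N^\gamma N^\delta$; the first term vanishes by \eqref{dive}, and symmetrising $\bD^\alpha N^\beta$ and using $\bfpi=\L_\bN\g$, i.e.\ $\bfpi_{\alpha\beta}=\bD_\alpha N_\beta+\bD_\beta N_\alpha$, gives $\bD^\alpha P_\alpha=\tfrac32\,Q_{\alpha\beta\gamma\delta}\,\bfpi^{\alpha\beta}N^\gamma N^\delta$. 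Integrating over the slab $\bigcup_{t\in[t_1,t_2]}\Hbf_t$, writing $dV_\g=n\,dt\,dV_{\g_t}$, and discarding the boundary flux at spacelike infinity (which vanishes by the standing compactness/decay hypothesis on the slices) yields $\Q[\bR]_{\Hbf_{t_2}}-\Q[\bR]_{\Hbf_{t_1}}=-\tfrac32\int_{t_1}^{t_2}\!\!\int_{\Hbf_t}\big(Q_{\alpha\beta\gamma\delta}\bfpi^{\alpha\beta}N^\gamma N^\delta\big)\,n\,dV_{\g_t}\,dt$, hence $\frac{d}{dt}\Q[\bR]_{\Hbf_t}=-\int_{\Hbf_t}(\bD^\alpha P_\alpha)\,n\,dV_{\g_t}$ (a standard manipulation, cf.~\cite{CK}).

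For the second step, \eqref{picpts} gives the frame components $n\bfpi^{00}=0$, $n\bfpi^{0i}=-\nablabf_i n$, $n\bfpi^{ij}=-2n\,k_{ij}$, so that $n\,Q_{\alpha\beta\gamma\delta}\bfpi^{\alpha\beta}N^\gamma N^\delta=-2\,Q_{i000}\,\nablabf_i n-2\,Q_{ij00}\,n\,k_{ij}$. Inserting $Q_{i000}=2(\bE\wedge\bH)_i$ and $Q_{ij00}=-2(\bE^2+\bH^2)_{ij}+(|\bE|^2+|\bH|^2)\,\delta_{ij}$ (which follow from \eqref{Q1} and the trace-freeness of $\bE,\bH$), and absorbing the trace term into $\Lambda(n\bk)=\tr(n\bk)\,\id_3-2n\bk$, one finds $-\tfrac32\,n\,Q_{\alpha\beta\gamma\delta}\bfpi^{\alpha\beta}N^\gamma N^\delta=3\,\mathcal B(\bE,\bH)$, where
\[
\mathcal B(\bE,\bH):=2\,(\bE\wedge\bH)\cdot\nablabf n+\big\langle\bE^2+\bH^2,\Lambda(n\bk)\big\rangle .
\]
Since $Q_{0000}=|\bE|^2+|\bH|^2$ by Proposition~\ref{REH-two}, it remains only to prove the pointwise inequality $\mathcal B(\bE,\bH)\le\rho(n\bfpi)\,(|\bE|^2+|\bH|^2)$ for symmetric trace-free $\bE,\bH$.

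This estimate is the heart of the argument. Write $b:=\nablabf n$, $\Lambda:=\Lambda(n\bk)$, $M:=\lambda\,\id_3-\Lambda$, and let $\widehat b$ be the antisymmetric matrix defined by $\widehat b\,u=b\times u$. Denoting by $E_{\cdot s}$ the $s$-th column of $\bE$ (equal to its $s$-th row, since $\bE$ is symmetric), one checks that $\langle\bE^2,\Lambda\rangle=\sum_s E_{\cdot s}^{\top}\Lambda\,E_{\cdot s}$ and $\bE\wedge\bH=\sum_s E_{\cdot s}\times H_{\cdot s}$, hence
\[
\lambda\,(|\bE|^2+|\bH|^2)-\mathcal B(\bE,\bH)=\sum_s\Big(E_{\cdot s}^{\top}M\,E_{\cdot s}+H_{\cdot s}^{\top}M\,H_{\cdot s}-2\,b\cdot(E_{\cdot s}\times H_{\cdot s})\Big).
\]
Each summand is the value at $(E_{\cdot s},H_{\cdot s})\in\RR^3\times\RR^3$ of the real quadratic form $(u,v)\mapsto u^{\top}Mu+v^{\top}Mv+2\,u^{\top}\widehat b\,v$; the complex substitution $w:=E_{\cdot s}+\mathrm i\,H_{\cdot s}\in\mathbb C^3$ turns it into $\langle w,(M-\mathrm i\,\widehat b)\,w\rangle$, the Hermitian form of the Hermitian $3\times3$ matrix $M-\mathrm i\,\widehat b$. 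A direct determinant computation (e.g.\ in an eigenbasis of $\Lambda$) then produces exactly the critical polynomial,
\[
\det\big(M-\mathrm i\,\widehat b\big)=\det(\lambda\,\id_3-\Lambda)+(\Lambda-\lambda\,\id_3)(\nablabf n,\nablabf n)=P_{n\bfpi}(\lambda),
\]
which is the identity \eqref{g-polynomial} announced above. Since $M-\mathrm i\,\widehat b$ is positive definite for $\lambda$ large while $\det(M-\mathrm i\,\widehat b)=P_{n\bfpi}(\lambda)>0$ for every $\lambda>\rho(n\bfpi)$, no eigenvalue can vanish there, so $M-\mathrm i\,\widehat b$ remains positive semi-definite for all $\lambda\ge\rho(n\bfpi)$; choosing $\lambda=\rho(n\bfpi)$ and summing over $s$ gives $\mathcal B(\bE,\bH)\le\rho(n\bfpi)\,(|\bE|^2+|\bH|^2)$ pointwise.

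Combining the three steps, $\frac{d}{dt}\Q[\bR]_{\Hbf_t}=3\int_{\Hbf_t}\mathcal B(\bE,\bH)\,dV_{\g_t}\le 3\big(\sup_{\Hbf_t}\rho(n\bfpi)\big)\int_{\Hbf_t}Q_{0000}\,dV_{\g_t}=3\big(\sup_{\Hbf_t}\rho(n\bfpi)\big)\,\Q[\bR]_{\Hbf_t}$, and since $\Q[\bR]_{\Hbf_t}\ge0$ by Proposition~\ref{REH-two}, Gr\"onwall's inequality on $[t_1,t_2]$ gives $\Q[\bR]_{\Hbf_{t_2}}\le e^{3K_{n\bfpi}(t_1,t_2)}\,\Q[\bR]_{\Hbf_{t_1}}$, as claimed. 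The main obstacle is the third step: recognising that the quadratic form in the pair $(\bE,\bH)$ reduces, via the column decomposition and the complex substitution $w=E_{\cdot s}+\mathrm i\,H_{\cdot s}$, to a single $3\times3$ Hermitian matrix whose determinant is precisely the critical polynomial $P_{n\bfpi}$; the remaining steps are bookkeeping with the algebraic identities of Section~\ref{section2}.
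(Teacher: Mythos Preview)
Your proof is correct and follows essentially the same route as the paper: the energy identity via Stokes' theorem, the reduction of the flux to a quadratic form in the columns of $(\bE,\bH)$, the identification of the critical polynomial $P_{n\bfpi}$ as the relevant characteristic polynomial, and Gr\"onwall. The only difference is cosmetic: where the paper packages the quadratic form as the real $6\times6$ symmetric matrix $\Pi=n\begin{pmatrix}\Lambda&-S\\S&\Lambda\end{pmatrix}$ with doubled eigenvalues and computes $\det(\Pi-\lambda\,\id_6)=P_{n\bfpi}(\lambda)^2$, you use the equivalent complex $3\times3$ Hermitian matrix $\Lambda+\mathrm i\,\widehat b$ via $w=E_{\cdot s}+\mathrm i\,H_{\cdot s}$ and compute $\det(\lambda\,\id_3-\Lambda-\mathrm i\,\widehat b)=P_{n\bfpi}(\lambda)$; these are the standard realification/complexification duals of one another, and the determinant computations coincide.
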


\

For the proof of this result, we will derive (in Section~\ref{section23}, below) the algebraic
inequality
\be
\label{345}
- \frac{1}{2} \, Q_{\a\b00} \, n \pi^{\a\b} \le \rho(n \bfpi) \, Q_{0000}.
\ee
At this stage, we only check that~\eqref{345} implies the energy inequality stated in Theorem~\ref{BR-estimate}.
Namely, fixing any $t_1, t_2 \in I$ such that $ t_1 \leq t_2$ and applying Stokes theorem to the vector field
$\bP_\a = Q[\bR]_{\a\b\gamma\d} N^\b N^{\gamma} N^\d$ on the manifold with boundary
$\M_{[t_1, t_2]} := \bigcup_{t \in [t_1, t_2]} \Hbf_t$,
we obtain
$$
\aligned
\frac{3}{2} \int_{\M_{[t_1, t_2]}} \hskip-.3cm Q_{\a\b\gamma\d} \, \pi^{\a\b} N^{\gamma} N^\d \, dV_\g
&= \int_{\M_{[t_1, t_2]}} \hskip-.3cm \bD^\a \bP_\a \, dV_\g
\\
&= \Q_{\Hbf_{t_1}} - \Q_{\Hbf_{t_2}},
\endaligned
$$
which implies
$$
\aligned
\Q_{\Hbf_{t_2}}
& = \Q_{\Hbf_{t_1}} - \frac{3}{2} \int_{t_1}^{t_2} \int_{\Hbf_s} Q_{\a\b00} \, n \pi^{\a\b} \, dV_{\g_s} ds
\\
& \le \Q_{\Hbf_{t_1}} + 3 \int_{t_1}^{t_2} \int_{\Hbf_s} \rho(n \bfpi) \, Q_{0000} \, dV_{\g_s} ds
\\
& \le \Q_{\Hbf_{t_1}} + 3 \int_{t_1}^{t_2} \sup_{\Hbf_s} \left( \rho(n \bfpi) \right) \, \Q_{\Hbf_s} \, ds,
\endaligned
$$
where, in the first inequality, we have used~\eqref{345}.
Since sufficient regularity has been assumed on all solutions under consideration,
we deduce that the derivative of $\Q_{\Hbf_t}$ at $t = t_1$ satisfies
\[
\left. \frac{d}{dt} \Q_{\Hbf_t} \right|_{t=t_1}
= \lim_{t_2 \to t_1} \frac{\Q_{\Hbf_{t_2}} - \Q_{\Hbf_{t_1}}}{t_2 - t_1}
\le 3 \sup_{\Hbf_{t_1}} \left( \rho(n \bfpi) \right) \Q_{\Hbf_{t_1}}.
\]
Since this inequality holds for arbitrary $t_1 \in I$,
we may integrate it and arrive at the energy estimate of interest:
$$
\Q_{\Hbf_{t_2}} \leq \Q_{\Hbf_{t_1}} \exp \left(3 \int_{t_1}^{t_2} \sup_{\Hbf_t} \rho(n \bfpi) \, dt \right),
\qquad t_1 \leq t_2.
$$

%----------------------------------------------------------------------------------------

\subsection{Derivation of the optimal condition}
\label{section23}

This section is devoted to deriving the algebraic inequality~\eqref{345} and, more precisely,
\be
\label{345bis}
- \frac{1}{2} \, Q_{\a\b00} \, n\pi^{\a\b} \le C \, Q_{0000} 
\ee
for some constant $C>0$ depending on certain assumptions stated below.
By using the algebraic properties on $Q_{\a\b00}$ stated in
Proposition~\ref{REH-two}, above, we now derive conditions (on the eigenvalues of $n \bfpi$) implying~\eqref{345bis}.
We discuss the general case first, and then consider special cases of independent interest.

\begin{proposition}[Main algebraic condition on $\bfpi$]
\label{pi3}
The product $- \frac{1}{2} Q_{\a\b00} \, n\pi^{\a\b}$, with $n \bfpi$ decomposed as
in~\eqref{picpts}, can be controlled by the (double) eigenvalues $a_1 \leq a_2 \leq a_3$
of the $6 \times 6$--matrix $\Pi$, given by
$$
\Pi := n \begin{pmatrix} \Lambda(\bk) & - S(\bfpi) \\ S(\bfpi) &\Lambda(\bk) \end{pmatrix}, 
\qquad S_{ij}(\bfpi) := \e_{ijk} \pi^{k0}, 
$$
where the $3 \times 3$--matrices $\Lambda$ and $S$ are symmetric and
skew-symmetric, respectively.
More precisely, the estimate~\eqref{345bis} holds with $C=a_3$, and
the eigenvalues $a_1 \le a_2 \le a_3$ of $\Pi$ are nothing but the roots
of the polynomial $P_{n \bfpi}$ (defined in~\eqref{poly}), so that
$$
a_3 = \rho(n \bfpi).
$$
\end{proposition}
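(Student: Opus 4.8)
The plan is to convert the scalar $-\tfrac12\,Q_{\a\b00}\,n\pi^{\a\b}$ into a genuine quadratic form on $\RR^6$ and then bound it, termwise, by the top eigenvalue of $\Pi$ via the Rayleigh quotient. First I would work in the orthonormal frame $\{\bfe_0=\bN,\bfe_1,\bfe_2,\bfe_3\}$ and read off from \eqref{picpts} the frame components $n\pi^{00}=0$, $n\pi^{0i}=-(\nablabf n)_i$, $n\pi^{ij}=-2\,(n\bk)_{ij}$. Substituting these together with the expressions for $Q_{i000}$ and $Q_{ij00}$ from Proposition~\ref{REH-two} (and using the total symmetry of $Q$, so that $Q_{0i00}=Q_{i000}$), using $\tr\bE=\tr\bH=0$ and the elementary consequence $(A\times A)_{ij}=2\,(A^{2})_{ij}-\tfrac23\,|A|^2\,\delta_{ij}$ of the definition of $\times$ for a symmetric trace-free $3\times3$ matrix $A$, one obtains after collecting all the resulting multiples of $\tr(n\bk)\,(|\bE|^2+|\bH|^2)$ the identity
\[
-\tfrac12\,Q_{\a\b00}\,n\pi^{\a\b}=\tr\!\big(\Lambda(n\bk)\,\bE^{2}\big)+\tr\!\big(\Lambda(n\bk)\,\bH^{2}\big)+2\,(\bE\wedge\bH)\cdot\nablabf n,
\]
where $\bE^{2},\bH^{2}$ denote matrix squares and $\Lambda(n\bk)=\tr(n\bk)\,\id_3-2\,n\bk$ is exactly the $3\times3$ block occurring in $\Pi$.

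Next I would decompose the symmetric matrices $\bE$ and $\bH$ into their columns, $\bE=(\bE^{(1)}\mid\bE^{(2)}\mid\bE^{(3)})$ and similarly for $\bH$. Symmetry of $\bE,\bH$ gives $\tr(\Lambda(n\bk)\,\bE^{2})=\sum_{l}(\bE^{(l)})^{T}\Lambda(n\bk)\,\bE^{(l)}$ and the same for $\bH$, while a short cross-product computation turns the definition of $A\wedge B$ into $\bE\wedge\bH=\sum_{l}\bE^{(l)}\times\bH^{(l)}$. Writing $N:=n\,S(\bfpi)$ for the skew-symmetric $3\times3$ matrix, which acts by $N\,v=\nablabf n\times v$, cyclicity of the scalar triple product recasts the last term above as $2\sum_{l}(\bH^{(l)})^{T}N\,\bE^{(l)}$. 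Since for $e,h\in\RR^3$ the block form of $\Pi$ together with $N^{T}=-N$ yields $e^{T}\Lambda(n\bk)\,e+h^{T}\Lambda(n\bk)\,h+2\,h^{T}N\,e=v^{T}\Pi\,v$ for the column vector $v=\left(\begin{smallmatrix}e\\ h\end{smallmatrix}\right)\in\RR^6$, everything assembles into the key identity
\[
-\tfrac12\,Q_{\a\b00}\,n\pi^{\a\b}=\sum_{l=1}^{3}(w^{(l)})^{T}\,\Pi\,w^{(l)},\qquad w^{(l)}:=\left(\begin{smallmatrix}\bE^{(l)}\\ \bH^{(l)}\end{smallmatrix}\right)\in\RR^6.
\]
As $\Pi$ is real symmetric, each summand obeys $(w^{(l)})^{T}\Pi\,w^{(l)}\le a_3\,|w^{(l)}|^2$, where $a_3$ is the largest eigenvalue of $\Pi$; summing over $l$ and using $\sum_{l}|w^{(l)}|^2=|\bE|^2+|\bH|^2=Q_{0000}$ gives exactly \eqref{345bis} with $C=a_3$.

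It remains to identify the spectrum of $\Pi$. The matrix $J:=\left(\begin{smallmatrix}0&-\id_3\\ \id_3&0\end{smallmatrix}\right)$ satisfies $J^2=-\id_6$ and, thanks to the pattern ``symmetric diagonal block, skew off-diagonal block'' of $\Pi$, commutes with $\Pi$. Hence $\Pi$ is $\mathbb C$-linear for the complex structure $J$, and under the identification $(\RR^6,J)\cong\mathbb C^3$, $(e,h)\mapsto e+i\,h$ (here $i=\sqrt{-1}$), the operator $\Pi$ becomes the Hermitian $3\times3$ matrix $M:=\Lambda(n\bk)+i\,N$. Consequently every eigenvalue of $\Pi$ is real and of even multiplicity, the distinct values being the three eigenvalues of $M$, which I list as $a_1\le a_2\le a_3$ -- the ``double'' eigenvalues of the statement. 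Finally, expanding the $3\times3$ determinant $\det(\lambda\,\id_3-M)=\det\big((\lambda\,\id_3-\Lambda(n\bk))-i\,N\big)$ multilinearly, and using that the adjugate of a symmetric matrix is symmetric (so it pairs to zero against the skew $N$), that $\det N=0$, and that $\operatorname{adj}(N)=(\nablabf n)(\nablabf n)^{T}$, one gets
\[
\det(\lambda\,\id_3-M)=\det\big(\lambda\,\id_3-\Lambda(n\bk)\big)-(\nablabf n)^{T}\big(\lambda\,\id_3-\Lambda(n\bk)\big)(\nablabf n)=P_{n\bfpi}(\lambda).
\]
Therefore $a_1,a_2,a_3$ are precisely the roots of the critical polynomial, all real, and $a_3$ is the largest of them, i.e.\ $a_3=\rho(n\bfpi)$ by Definition~\ref{rho}; in particular \eqref{345} holds.

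The step I expect to be the main obstacle is the very first one: the careful bookkeeping that collapses $Q_{\a\b00}\,n\pi^{\a\b}$ into the compact form above, in particular tracking the several $\tfrac13$ and $-\tfrac23$ coefficients in $Q_{ij00}$ and in $\bE\times\bE$, $\bH\times\bH$ so that exactly the combination $\Lambda(n\bk)$ emerges. Once that algebraic identity is secured the remainder is conceptual rather than computational: the column decomposition decouples the three columns of $(\bE,\bH)$ into three copies of the same $6$-dimensional quadratic form, and recognising $\Pi$ as a disguised Hermitian $3\times3$ complex matrix simultaneously delivers the doubling of the eigenvalues and the coincidence of its characteristic polynomial with $P_{n\bfpi}$.
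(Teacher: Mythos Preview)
Your proof is correct. The reduction of $-\tfrac12\,Q_{\a\b00}\,n\pi^{\a\b}$ to the quadratic form $\sum_l (w^{(l)})^T\Pi\,w^{(l)}$ and the ensuing Rayleigh-quotient bound are exactly what the paper does: its trace expression $\tr\!\big((\bE,\bH)\,\Pi\,(\begin{smallmatrix}\bE\\ \bH\end{smallmatrix})\big)$ is nothing but your column sum, and the paper's observation that $(u,v)\mapsto(v,-u)$ produces a second eigenvector with the same eigenvalue is precisely the action of your complex structure $-J$.

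Where you genuinely diverge is in the computation of the characteristic polynomial. The paper puts $\Lambda$ in diagonal form, writes out the $6\times 6$ determinant explicitly, factors it as $P(\lambda)^2$, and then appeals to $\mathrm{SO}(3)$-invariance to remove the diagonal assumption. You instead exploit the commutation $[\Pi,J]=0$ to identify $\Pi$ with the Hermitian $3\times 3$ matrix $M=\Lambda(n\bk)+iN$ and compute $\det(\lambda\,\id_3-M)$ directly via the expansion $\det(A-iN)=\det A - \tr(A\,\operatorname{adj}N)$ (the linear and cubic pieces vanishing by symmetry/skewness and $\det N=0$), together with $\operatorname{adj}N=(\nablabf n)(\nablabf n)^T$. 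This is a cleaner and frame-independent route: it explains \emph{a priori} why the polynomial has three real roots (Hermitian spectrum) and why they are double for $\Pi$, and it dispenses with the diagonalization step and the subsequent invariance argument. The paper's brute-force computation, on the other hand, has the virtue of producing the fully expanded cubic \eqref{g-polynomial} in one shot, which is convenient for the later special cases and for Remark~\ref{rem:gpoly}.
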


\begin{proof}
From Proposition~\ref{REH-two} we compute
\begin{align*}
-\frac{1}{2} Q_{\a\b00} \pi^{\a\b} &= Q_{ij00} k^{ij} - Q_{i000} \pi^{i0}
\\
&= \left( \delta_{ij} \left( |\bE|^2 + |\bH|^2 \right) - 2 \left( E_{ik} E_{jk}
+ H_{ik} H_{jk} \right)
\right) k^{ij} - 2 \epsilon_{ijk} E_{jl} H_{lk} \pi^{i0}
\\
&= \tr\bk \left( |\bE|^2 + |\bH|^2 \right) - 2 \tr \left( \bk \left( \bE^2 +
\bH^2 \right) \right)
- 2 S_{jk} E_{jl} H_{lk},
\end{align*}
so
\begin{align}
-\frac{1}{2} Q_{\a\b00} \pi^{\a\b}
&= \tr \left( \Lambda \left( \bE^2 + \bH^2 \right) - 2 S \bH \bE \right)
\nonumber
\\
&= \tr \left( \bE \Lambda \bE + \bH \Lambda \bH - \bE S \bH + \bH S \bE \right)
\label{Qpisum}
\end{align}
and, equivalently,
\be
\label{Qpi}
\aligned
- \frac{1}{2} \, Q_{\a\b00} \, n \pi^{\a\b}
& =
n \tr \left( \left( \bE, \bH \right)
\begin{pmatrix} \Lambda & - S \\ S & \Lambda \end{pmatrix}
\begin{pmatrix} \bE \\ \bH \end{pmatrix} \right)
\\
& =
\tr \left( \left( \bE, \bH \right)
\Pi
\begin{pmatrix} \bE \\ \bH \end{pmatrix} \right),
\endaligned
\ee
which provides us with a rather explicit expression for the left--hand side of \eqref{345bis}.

Observe that the $3 \times 3$ matrices $\Lambda$ and $S$ are symmetric and
skew-symmetric, respectively.
The linear map $\Pi$ given in the statement of the proposition
is a symmetric $6 \times 6$ matrix. Moreover, if we have an eigenvector of
$\Pi$ with eigenvalue $a$ and write it in the form $\begin{pmatrix} u \\ v \end{pmatrix}$ where $u, v
\in \RR^3$, then it is obvious that $\begin{pmatrix} v \\ -u \end{pmatrix}$ is a
distinct eigenvector of $\Pi$ associated with the same
eigenvalue $a$. Therefore, the matrix $\Pi$ has (at most) three distinct and real
eigenvalues, each of which appears
with multiplicity two. Denoting these eigenvalues by $a_1 \le a_2 \le a_3$,
from~\eqref{Qpi} and~\eqref{Q1}, we deduce:
\bel{a3eqn}
- \frac{1}{2} \, Q_{\a\b00} \, n\pi^{\a\b}
\le \max(a_1, a_2, a_3) \tr \left( \left( \bE \;\; \bH \right)
\begin{pmatrix} \bE \\ \bH \end{pmatrix} \right)
= a_3 \, Q_{0000}.
\ee

Next, we need to study the eigenvalues of the matrix $\Pi$ in terms of $\bfpi$.
First, we consider the case that the matrix $\Lambda$ is diagonal, so of the form $\diag[A, B, C]$, and we set
$$
a := - S_{23} = \nablabf_1 \log n, \qquad
b := - S_{31} = \nablabf_2 \log n, \qquad
c := - S_{12} = \nablabf_3 \log n.
$$
We therefore have
\be\label{Sigma}
\Pi =
n \begin{pmatrix}
A &0 &0 &0 &c &-b \\
0 &B &0 &-c &0 &a \\
0 &0 &C &b &-a &0 \\
0 &-c &b &A &0 &0 \\
c &0 &-a &0 &B &0 \\
-b &a &0 &0 &0 &C
\end{pmatrix},
\ee
and a simple calculation leads us to
\bel{detPi}
\det \left( \Pi - \lambda \, \id_6 \right) = P(\lambda)^2,
\ee
where
\be
\label{g-polynomial}
\aligned
P(\lambda)
:= & \lambda^3 - \lambda^2 n (A+B+C)
+ \lambda n^2 (AB + BC + CA - (a^2 + b^2 + c^2))
\\
& + n^3 \left( - A B C + Aa^2 + Bb^2 + Cc^2 \right)
\\
=
& \det \left( \lambda \id_3 - n \Lambda \right) +
\left( n A - \lambda \right) n^2 a^2 + \left( n B - \lambda \right) n^2 b^2
+ \left( n C - \lambda \right) n^2 c^2
\\
=
& \det \left( \lambda \id_3 - n \Lambda \right) +
\left( n \Lambda - \lambda \id_3 \right)
 \left( \boldsymbol{\nablabf} n, \boldsymbol{\nablabf}  n \right).
\endaligned
\ee
Hence, the eigenvalues of the matrix $\Pi$ are
given by the roots of the cubic polynomial, $P_{n \bfpi}$ defined in~\eqref{poly}.
From the fact that the latter expression in \eqref{g-polynomial} is $\mathrm{SO}(3)$--invariant, we deduce that
the same conclusion holds even when $\Lambda$ is not diagonal.
The discussion above implies that the polynomial  $P_{n \bfpi}$ has
three real roots $a_1 \le a_2 \le a_3$, and the result then follows from~\eqref{a3eqn}.
\end{proof}

\begin{remark} \rm
\label{rem:gpoly}
In~\eqref{g-polynomial}, only the squares of the coefficients $a, b, c$ appear.
Therefore, to bound the roots of the polynomial $P_{n \bfpi}$,
we will need to impose a bound on $|\nablabf  n|$. 
However, the fact that the polynomial $P_{n \bfpi}(\lambda)$ is invariant under the change of sign
of $a, b, c$ will later work at our advantage and
allow us to carry over our analysis to the case of Maxwell and
Yang--Mills fields with only minor changes.
\end{remark}

\subsection{Two special cases}

Although it is possible to write down the roots of a cubic polynomial
explicitly, in most cases it is more useful to estimate these roots by
simpler quantities. We first consider two illustrative special cases of vanishing $\bk$ and
vanishing $\nabla n$, respectively.

\begin{corollary}[The case of vanishing $\bk$]
\label{pi2}
Assume that the second fundamental form of the foliation slices vanish identically and, therefore,
$\bfpi$ is of the form
$$
\bfpi = \left( \begin{array}{cc}
                     0 & \nablabf \log n \\
		     \nablabf \log n & 0
                    \end{array}
\right).
$$
Then, the condition~\eqref{345bis} is satisfied with $C = | \nablabf n |$.
\end{corollary}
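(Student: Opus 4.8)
The plan is to specialize Proposition~\ref{pi3} to the case $\bk \equiv 0$ and read off what the critical root becomes. When $\bk = 0$ we have $\Lambda(\bk) = (\tr \bk)\,\id_3 - 2\bk = 0$, so the $6 \times 6$ matrix $\Pi$ collapses to
\[
\Pi = n \begin{pmatrix} 0 & -S(\bfpi) \\ S(\bfpi) & 0 \end{pmatrix},
\]
and the cubic polynomial $P_{n\bfpi}$ from \eqref{g-polynomial} (or equivalently \eqref{poly}) reduces, upon setting $A = B = C = 0$, to $P_{n\bfpi}(\lambda) = \lambda^3 - \lambda \, n^2 (a^2 + b^2 + c^2)$, where $a,b,c$ are the components of $\nablabf \log n$ in the chosen orthonormal frame. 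Hence the three roots are $\lambda = 0$ and $\lambda = \pm\, n\sqrt{a^2 + b^2 + c^2} = \pm\, n \,|\nablabf \log n|$.

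Next I would identify the quantity $n\,|\nablabf \log n|$ with $|\nablabf n|$. Indeed $\nablabf \log n = n^{-1} \nablabf n$, so $n\,|\nablabf \log n|_{\g_t} = |\nablabf n|_{\g_t}$, and thus the largest root is $\rho(n\bfpi) = a_3 = |\nablabf n|$. Since Proposition~\ref{pi3} asserts that the estimate \eqref{345bis} holds with $C = a_3$, we conclude that in this case one may take $C = |\nablabf n|$, which is exactly the claimed statement of Corollary~\ref{pi2}. One should also note, as a sanity check, that the matrix $S(\bfpi)$ here is built from $\pi^{k0} = \g(\bN,\bN)^{-1}\cdots$; with the conventions of \eqref{picpts} one has $S_{ij} = \e_{ijk}(\nablabf \log n)^k$, so that $-S_{23} = \nablabf_1 \log n$, etc., matching the notation $a, b, c$ introduced just before \eqref{Sigma}.

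The whole argument is essentially a substitution into an already-established general result, so there is no serious obstacle; the only point requiring a small amount of care is bookkeeping with the factors of $n$ versus $\log n$ — i.e., making sure that the polynomial one obtains is the one whose coefficients involve $n\,\nablabf \log n = \nablabf n$ rather than $\nablabf \log n$ itself — and checking that the expression \eqref{g-polynomial} was indeed written in terms of $\nablabf n$ (which it is, via the term $(n\Lambda - \lambda\,\id_3)(\nablabf n, \nablabf n)$). With $\Lambda = 0$ this term is $-\lambda\,|\nablabf n|^2$, confirming $P_{n\bfpi}(\lambda) = \lambda^3 - \lambda\,|\nablabf n|^2 = \lambda(\lambda - |\nablabf n|)(\lambda + |\nablabf n|)$ and hence $\rho(n\bfpi) = |\nablabf n|$ as required.
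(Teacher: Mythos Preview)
Your proposal is correct and follows essentially the same route as the paper: specialize Proposition~\ref{pi3} with $\Lambda=0$, compute the roots of $P_{n\bfpi}$ (equivalently the eigenvalues of $\Pi$) as $0,\pm|\nablabf n|$, and conclude that~\eqref{345bis} holds with $C=|\nablabf n|$. The paper's proof additionally records the two-sided inequality $-|\nablabf n|\,Q_{0000}\le -\tfrac12 Q_{\a\b00}\,n\pi^{\a\b}\le |\nablabf n|\,Q_{0000}$, which you could mention since it drops out of your root computation for free.
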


Under the assumptions in Corollary~\ref{pi2} we obtain a stronger conclusion and, actually, both an upper and a lower bound for the Bel--Robinson energy.
(Cf.~the two inequalities~\eqref{doublesided}, below.)

\begin{proof}
By setting $\Lambda := 0$ and $S_{jk} := \e_{ijk} \nablabf_i \log n$ in
Proposition~\ref{pi3}, we obtain
$$
- \frac{1}{2} \, Q_{\a\b00} \, n\pi^{\a\b} =
n \tr \left( \left( \bE, \bH \right)
\begin{pmatrix} 0 & -S \\ S & 0 \end{pmatrix}
\begin{pmatrix} \bE \\ \bH \end{pmatrix} \right)
\leq \max(a_1, a_2, a_3) \, Q_{0000}.
$$
The eigenvalues $a_1,a_2,a_3$ of the symmetric matrix
$$
\Pi = n \begin{pmatrix} 0 & -S \\ S & 0 \end{pmatrix}
$$
are the roots of the characteristic polynomial~\eqref{g-polynomial},
i.e.~$a_1 = - | \nablabf n |$, $a_2 = 0$, and $a_3 = | \nablabf n |$.
Therefore, we find
\bel{doublesided}
- | \nablabf n | \, Q_{0000}
\le -\frac{1}{2} Q_{\a\b00} \, n \pi^{\a\b}
\le | \nablabf n | \, Q_{0000},
\ee
and $Q_{\a\b00} \, n\pi^{\a\b}$ is controlled by $|\nablabf n|$ and $Q_{0000}$.
\end{proof}

From now on, we denote by
\be
\label{k123}
k_1 \leq k_2 \leq k_3
\ee
the eigenvalues of the second fundamental form $\bk$ of the foliation slices.

\begin{corollary}[The case of spatially constant lapse]
\label{pi1}
Assume that the lapse of the foliation is constant and, therefore, $\bfpi$ is of the form
$$
\bfpi = \left( \begin{array}{cc}
                     0 & 0 \\
		     0 & - 2 \bk
                    \end{array}
\right). 
$$
Then, if the inequality
\bel{k-condition}
n \left( k_2 + k_3 \right) \le C + n k_1
\ee
is satisfied for some real $C$, then the condition~\eqref{345bis} holds with this constant.
\end{corollary}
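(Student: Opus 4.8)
The plan is to specialize Proposition~\ref{pi3} to the case $\nablabf n = 0$ and then read off the critical root explicitly. When the lapse of the foliation is constant, $\nablabf n$ vanishes identically, so in the notation of Proposition~\ref{pi3} we have $S(\bfpi) = 0$ and the $6 \times 6$ matrix $\Pi$ becomes block-diagonal, $\Pi = n \, \diag[\Lambda(\bk), \Lambda(\bk)]$. Equivalently, in the expression~\eqref{g-polynomial} for the critical polynomial the coefficients $a, b, c$ all vanish (as they equal $\nablabf_i \log n$), and $P_{n\bfpi}$ reduces to $P_{n\bfpi}(\lambda) = \det(\lambda \, \id_3 - n\Lambda(\bk))$; its roots are thus simply the eigenvalues of $n\Lambda(\bk)$.

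Next I would compute those eigenvalues. Since $\Lambda(\bk) = (\tr \bk) \, \id_3 - 2\bk$ and $\bk$ has eigenvalues $k_1 \le k_2 \le k_3$ as in~\eqref{k123}, the matrix $\Lambda(\bk)$ has eigenvalues $k_2 + k_3 - k_1$, $\ k_1 + k_3 - k_2$, and $k_1 + k_2 - k_3$. Using the ordering $k_1 \le k_2 \le k_3$, one checks at once that $k_2 + k_3 - k_1$ is the largest of the three, since its differences with the other two are $2(k_2 - k_1) \ge 0$ and $2(k_3 - k_1) \ge 0$, respectively. As $n > 0$, multiplication by $n$ preserves the ordering, so the critical root of $n\bfpi$ is $\rho(n\bfpi) = n(k_2 + k_3 - k_1) = n(k_2 + k_3) - n k_1$.

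Finally, Proposition~\ref{pi3} directly gives~\eqref{345bis} with the constant $C = a_3 = \rho(n\bfpi)$. If instead we are only given that $n(k_2 + k_3) \le C + n k_1$ for some real $C$, then $\rho(n\bfpi) \le C$, and since $Q_{0000} = |\bE|^2 + |\bH|^2 \ge 0$ by~\eqref{Q1}, the chain $-\frac{1}{2} Q_{\a\b00} \, n\pi^{\a\b} \le \rho(n\bfpi) \, Q_{0000} \le C \, Q_{0000}$ yields~\eqref{345bis} with this constant, as claimed. I do not expect any real obstacle: the only points requiring a moment's care are the sign bookkeeping that identifies $k_2 + k_3 - k_1$ as the largest eigenvalue of $\Lambda(\bk)$ and the non-negativity of $Q_{0000}$; everything else is an immediate specialization of Proposition~\ref{pi3}.
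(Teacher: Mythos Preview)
Your proof is correct and follows essentially the same approach as the paper: specialize Proposition~\ref{pi3} to the case $\nablabf n = 0$, observe that the critical polynomial reduces to the characteristic polynomial of $n\Lambda(\bk)$, and identify its largest root as $n(-k_1+k_2+k_3)$. The only addition you make is spelling out explicitly that $Q_{0000}\ge 0$ is needed to pass from $\rho(n\bfpi)\le C$ to~\eqref{345bis}, which the paper leaves implicit.
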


The condition~\eqref{k-condition} implies that
$$
n k_1 \le n k_2 \le n k_3 \le C,
$$
so that the principal curvatures of the hypersurface $\Hbf_t$  (weighted by the lapse $n$) are bounded
above. Geometrically speaking,
this condition means that the normal geodesics to the hypersurface $\Hbf_t$
are not focusing inwards too quickly.
It seems reasonable that such a condition is required if we wish for
our foliation not to collapse.

\begin{proof}
Since the second fundamental form is (real-valued and) symmetric, we may,
without loss of generality, assume that $\bk$ is diagonal of the form
$\mathrm{diag}[k_1, k_2, k_3]$ with, say, $k_1 \le k_2 \le k_3$.
We then deduce that
$\Lambda(\bk) = \mathrm{diag}[\lambda_1, \lambda_2, \lambda_3]$
with
\[
\lambda_1 := -k_1 + k_2 + k_3, \qquad
\lambda_2 := k_1 - k_2 + k_3, \qquad
\lambda_3 := k_1 + k_2 - k_3.
\]
Observe that $\lambda_3 \le \lambda_2 \le \lambda_1$ and, in view of
equation~\eqref{poly}, that the eigenvalues of the matrix $\Pi$ are $n \lambda_i$.
As such, the condition~\eqref{345bis} is satisfied if and only if
\[
n \lambda_1 \le C,
\]
which gives~\eqref{k-condition}.
\end{proof}

\begin{remark} \rm
Another way to view our result is to relax the ordering condition on the
eigenvalues and impose that
$K_i := n k_i - C \le 0$. Defining the vectors
$$
\mathbf{f}_1 := \begin{pmatrix} -1 \\ 1 \\ 1 \end{pmatrix}, \qquad
\mathbf{f}_2 := \begin{pmatrix} 1 \\ -1 \\ 1 \end{pmatrix}, \qquad
\mathbf{f}_3 := \begin{pmatrix} 1 \\ 1 \\ -1 \end{pmatrix},
$$
then the condition~\eqref{k-condition} may be rewritten in the form
$\langle \mathbf{K}, \mathbf{f}_i \rangle_{\RR^3} \le 0$. 
Therefore, the vector $\mathbf{K}$ lies in the intersection of three half-planes
through the origin in
$\mathbb{R}^3$. As such, $n \bk$ lies in a non-compact, triangular cone in $\RR^3$
with vertex at the point
$(C, C, C)$. Although the eigenvalues of $n \bk$ are bounded above, they may all
be arbitrarily large as
long as they remain within this cone. Observe that this cone contains the case
where $n \bk$ is a multiple ($\le C$) of the identity matrix
and, therefore, contains matrices that are not too far from a multiple of the
identity matrix.
The bound under consideration is very different from
the bounds considered in~\cite{KR, Wang}
which impose a restriction on $|\bk|$ and, hence,
impose both upper and lower bounds on the principal curvatures of the
hypersurfaces of the foliation.
\end{remark}

\begin{remark} \rm Under the conditions in Corollary~\ref{pi1}, assume also that the foliation consists of slices with constant mean curvature, that is, $\tr \bk = t$ on $\Hbf_t$ for all $t \in I$. If $\bk$ satisfies~\eqref{k-condition}, then
$2 n k_1(t) + C \ge n \tr\bk = n t$
and, therefore,
$$
n k_1(t) \ge \frac{1}{2} \left( n t - C \right).
$$
It follows that, as long as the lapse is bounded away from zero,
then $\bk$ is automatically also bounded from below as well as from above.
\end{remark}

%----------------------------------------------------------------------------------------

\subsection{Derivation of sufficient conditions}
\label{section24}

For general tensors $\bfpi$, we may combine the results of Corollary~\ref{pi2} and~\ref{pi1}
and arrive at a simplified condition on $\bfpi$, that is,
the condition~\eqref{pi12}, below,
which is in accordance with~\eqref{sigma-condition} derived
in the general case.

\begin{proposition}[A sufficient condition for general foliations. I]
A sufficient condition for~\eqref{345bis} to hold is that
\bel{pi12}
n \left( -k_1 + k_2 + k_3 \right) + | \nablabf n | \leq C,
\ee
which is automatically satisfied under the stronger
condition
\[
\max\big( n \left( -k_1 + k_2 + k_3 \right), | \nablabf n | \big) \leq \frac{C}{2}.
\]
\end{proposition}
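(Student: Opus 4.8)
The plan is to bound the critical root $\rho(n\bfpi)$ directly, without computing the roots of $P_{n\bfpi}$ explicitly, by splitting the symmetric $6\times 6$ matrix $\Pi$ of Proposition~\ref{pi3} into the two pieces that were already analyzed separately in Corollaries~\ref{pi2} and~\ref{pi1}, and then invoking the subadditivity of the largest eigenvalue of a symmetric matrix.

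First I would record what needs to be proved: by Proposition~\ref{pi3} (specifically~\eqref{a3eqn}) the estimate~\eqref{345bis} holds with any constant $C$ satisfying $\rho(n\bfpi) \le C$, since $\rho(n\bfpi) = a_3$ is the largest eigenvalue of $\Pi$ and, by~\eqref{Q1}, $Q_{0000} = |\bE|^2 + |\bH|^2 \ge 0$; thus $-\frac{1}{2} Q_{\a\b00}\, n\pi^{\a\b} \le \rho(n\bfpi)\,Q_{0000} \le C\,Q_{0000}$. So it suffices to show that the left--hand side of~\eqref{pi12} is an upper bound for $\rho(n\bfpi)$.

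Next, assuming without loss of generality that $\bk = \diag[k_1,k_2,k_3]$ with $k_1\le k_2\le k_3$ (so that $\Lambda(\bk)$ is diagonal), I would write $\Pi = M + N$, where $M := n\begin{pmatrix}\Lambda(\bk) & 0 \\ 0 & \Lambda(\bk)\end{pmatrix}$ and $N := n\begin{pmatrix} 0 & -S(\bfpi) \\ S(\bfpi) & 0\end{pmatrix}$; both matrices are symmetric, since $S(\bfpi)$ is skew--symmetric. By the variational characterization $\lambda_{\max}(A) = \max_{|x|=1}\langle Ax,x\rangle$ the function $\lambda_{\max}$ is subadditive on symmetric matrices, hence $\rho(n\bfpi) = \lambda_{\max}(\Pi) \le \lambda_{\max}(M) + \lambda_{\max}(N)$. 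Now $M$ is block--diagonal, so its spectrum is that of $n\Lambda(\bk)$ with each eigenvalue doubled; by the computation in the proof of Corollary~\ref{pi1} this spectrum is $\{n\lambda_i\}$ with $\lambda_1 = -k_1+k_2+k_3 \ge \lambda_2 \ge \lambda_3$, so that $\lambda_{\max}(M) = n(-k_1+k_2+k_3)$. And $N$ is precisely the matrix treated in Corollary~\ref{pi2} (the case $\Lambda = 0$), whose eigenvalues were found there to be $-|\nablabf n|$, $0$, $|\nablabf n|$, so $\lambda_{\max}(N) = |\nablabf n|$. Combining, $\rho(n\bfpi) \le n(-k_1+k_2+k_3) + |\nablabf n|$, which is $\le C$ under~\eqref{pi12}; this establishes that~\eqref{pi12} is sufficient. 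The last assertion is immediate: if $\max\big(n(-k_1+k_2+k_3),\,|\nablabf n|\big) \le \frac{C}{2}$, then, both terms being nonnegative, their sum is $\le C$, i.e.~\eqref{pi12} holds.

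I do not expect a substantive obstacle here: the argument is essentially a one--line application of Weyl's inequality once the two building--block computations are in place. The only points to be careful about are that $M$ and $N$ are genuinely symmetric (so that subadditivity of $\lambda_{\max}$ applies even though $M$ and $N$ need not commute), and that one correctly reads off $n(-k_1+k_2+k_3)$ as the top eigenvalue of $n\Lambda(\bk)$ from the ordering $k_1\le k_2\le k_3$; both facts are already available from Corollaries~\ref{pi2} and~\ref{pi1}.
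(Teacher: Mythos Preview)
Your proof is correct and essentially coincides with the paper's: both split the contribution into the $\Lambda$--part and the $S$--part and bound each by the two special cases---the paper does this directly at the level of the trace expression~\eqref{Qpisum}, while you do it at the level of the matrix $\Pi = M + N$ via Weyl's inequality, which is the same splitting viewed through the Rayleigh quotient. One small slip: in the final sentence the quantity $n(-k_1+k_2+k_3)$ need not be nonnegative (take $k_1=k_2=k_3<0$), but this is irrelevant since $a\le C/2$ and $b\le C/2$ already give $a+b\le C$ without any sign assumption.
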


\begin{proof}
From~\eqref{Qpisum}, we have
$$
- \frac{1}{2} n \, Q_{\a\b00} \pi^{\a\b}
= n \, \tr \left( \bE \Lambda \bE + \bH \Lambda \bH \right)
+ n \, \tr \left( \bH S \bE - \bE S \bH \right).
$$
We now estimate the two terms on the right-hand-side of this equation
separately. Corollary~\ref{pi1} implies that the first term is bounded above by
$n \left( -k_1 + k_2 + k_3 \right) Q_{0000}$, while Corollary~\ref{pi2} implies
that the second term is bounded above by $| \nablabf n | Q_{0000}$.
\end{proof}

Using estimates for eigenvalues of symmetric matrices, we deduce the following
assumptions that are sufficient to bound $Q_{\a\b00}\pi^{\a\b}$ without having
to compute all eigenvalues explicitly.

\begin{proposition}[A sufficient condition for general foliations. II]
\label{sigma-bounds}
By defining
\begin{align*}
\sigma_1 &:= n \left( -k_1 + k_2 + k_3 \right) + n |\pi^{20}| + n |\pi^{30}|, \\
\sigma_2 &:= n \left( k_1 - k_2 + k_3 \right) + n |\pi^{30}| + n |\pi^{10}|, \\
\sigma_3 &:= n \left( k_1 + k_2 - k_3 \right) + n |\pi^{10}| + n|\pi^{20}|,
\end{align*}
the condition~\eqref{345bis} holds with
\bel{sigma-condition}
C = \max(\sigma_1, \sigma_2, \sigma_3).
\ee
In particular, this holds if $n \bk$ satisfies~\eqref{k-condition}
and $|\nablabf n|$ is bounded.
\end{proposition}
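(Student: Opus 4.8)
The plan is to estimate the largest eigenvalue $\rho(n\bfpi) = a_3$ of the matrix $\Pi$ from Proposition~\ref{pi3} by $\max(\sigma_1,\sigma_2,\sigma_3)$ by means of Gershgorin's circle theorem, and then to upgrade the inequality~\eqref{345bis} from the constant $\rho(n\bfpi)$ to this larger constant, using that $Q_{0000} = |\bE|^2 + |\bH|^2 \ge 0$.

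First I would fix the orthonormal frame $\{\bfe_0,\bfe_1,\bfe_2,\bfe_3\}$ so that $\bfe_0 = \bN$ and $\bfe_1,\bfe_2,\bfe_3$ diagonalize the second fundamental form $\bk$, with eigenvalues ordered as in~\eqref{k123}. As in the proof of Corollary~\ref{pi1}, this gives $\Lambda(\bk) = \diag[\lambda_1,\lambda_2,\lambda_3]$ with $\lambda_1 = -k_1+k_2+k_3$, $\lambda_2 = k_1-k_2+k_3$, $\lambda_3 = k_1+k_2-k_3$, while the quantities $a = \nablabf_1\log n$, $b = \nablabf_2\log n$, $c = \nablabf_3\log n$ appearing in the proof of Proposition~\ref{pi3} satisfy $n|a| = |\nablabf_1 n| = n|\pi^{10}|$ and, likewise, $n|b| = n|\pi^{20}|$ and $n|c| = n|\pi^{30}|$. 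In this frame $\Pi$ has the explicit form~\eqref{Sigma}, and the definitions of $\sigma_1,\sigma_2,\sigma_3$ become
\[
\sigma_1 = n\lambda_1 + n|b| + n|c|, \quad
\sigma_2 = n\lambda_2 + n|c| + n|a|, \quad
\sigma_3 = n\lambda_3 + n|a| + n|b|.
\]

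The key step is then an application of Gershgorin's circle theorem to the real symmetric matrix $\Pi$ of~\eqref{Sigma}. Reading off that matrix, rows $1$ and $4$ have diagonal entry $n\lambda_1$ and off-diagonal absolute row sum $n|b|+n|c|$; rows $2$ and $5$ have diagonal entry $n\lambda_2$ and off-diagonal absolute row sum $n|c|+n|a|$; rows $3$ and $6$ have diagonal entry $n\lambda_3$ and off-diagonal absolute row sum $n|a|+n|b|$. Since $\Pi$ is real symmetric, all of its eigenvalues are real and each one lies in one of the three closed intervals $[n\lambda_i - r_i,\, n\lambda_i + r_i]$ with $n\lambda_i + r_i = \sigma_i$; hence $\rho(n\bfpi) = a_3 \le \max(\sigma_1,\sigma_2,\sigma_3)$. (Equivalently, as already observed in the proof of Proposition~\ref{pi3}, $\Pi$ is the realification of the $3\times 3$ Hermitian matrix $n(\Lambda(\bk) + i\,S(\bfpi))$, and Gershgorin applied to the latter yields the same three intervals directly.) Combining this with Proposition~\ref{pi3}, which gives $-\frac{1}{2}Q_{\a\b00}\,n\pi^{\a\b} \le \rho(n\bfpi)\,Q_{0000}$, and with $Q_{0000}\ge 0$, we obtain $-\frac{1}{2}Q_{\a\b00}\,n\pi^{\a\b} \le \max(\sigma_1,\sigma_2,\sigma_3)\,Q_{0000}$, which is~\eqref{345bis} with $C = \max(\sigma_1,\sigma_2,\sigma_3)$.

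For the last assertion I would recall, as in the discussion following Corollary~\ref{pi1}, that~\eqref{k-condition} forces $nk_1 \le nk_2 \le nk_3 \le C_0$ for the associated constant $C_0$; indeed $n\lambda_1 \le C_0$ is exactly~\eqref{k-condition}, while $n\lambda_2 = nk_1 - nk_2 + nk_3 \le nk_3 \le C_0$ and $n\lambda_3 = nk_1 + nk_2 - nk_3 \le nk_2 \le C_0$. Moreover, if $|\nablabf n| \le C_1$, then $n|\pi^{i0}| = |\nablabf_i n| \le C_1$ for $i = 1,2,3$. Therefore $\sigma_i \le C_0 + 2C_1$ for each $i$, so $C = \max(\sigma_1,\sigma_2,\sigma_3)$ is finite, which is what is required. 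The only point demanding care is the bookkeeping that matches the six Gershgorin row sums to the three quantities $\sigma_i$, together with the identification $n|\pi^{i0}| = |\nablabf_i n|$; beyond that, no genuine difficulty arises.
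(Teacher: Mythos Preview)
Your proof is correct and follows essentially the same approach as the paper: apply Gershgorin's circle theorem to the explicit $6\times 6$ matrix $\Pi$ of~\eqref{Sigma} (in the frame diagonalizing $\bk$) to bound its largest eigenvalue by $\max(\sigma_1,\sigma_2,\sigma_3)$, and then invoke Proposition~\ref{pi3}. Your write-up is in fact more detailed than the paper's, which simply states the Gershgorin discs and concludes; your added verification of the row sums, the identification $n|\pi^{i0}| = |\nablabf_i n|$, and the explicit argument for the final ``in particular'' clause are all correct and helpful.
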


\begin{proof}
Gershgorin's circle theorem allows us to bound the spectrum of the symmetric
matrix $\Pi$ and, precisely, states that the eigenvalues of $\Pi$
(with the notation~\eqref{Sigma}) are contained in the (closed) balls
$B_{n|b|+n|c|}(n A)$, $B_{n|a| + n|c|}(n B)$ and $B_{n |a| + n |b|}(n C)$.
Thus, an upper bound on the eigenvalues is given by
$\max(\sigma_1, \sigma_2, \sigma_3)$. Together with Proposition~\ref{pi3} this
leads to the desired estimate~\eqref{345bis}.
\end{proof}

%-----------------------------------------------------------------------------------------------

\subsection{Weyl fields on curved spacetimes}

Our previous results are now generalized to Weyl fields defined on a fixed Lorentzian manifold $(\Mcal, \g)$.
Recall from Christodoulou and Klainerman~\cite[Chapter~7]{CK} that a {\sl Weyl field} is a $(0, 4)$--tensor field, say $W_{\alpha\beta\gamma\delta}$, that has the same
symmetries as the Riemann tensor and, in addition, is trace-free, i.e.
$$
\aligned
W_{\a\b\gamma\d} & = W_{[\a\b][\gamma\d]}, \qquad W_{[\a\b\gamma]\delta} = 0,
\\
W^\a{}_{\b\a\delta} & = 0.
\endaligned
$$
Again, the left and right duals,
$$
{^\star W}_{\a\b\gamma\d} := \frac{1}{2} \e_{\a\b}{}^{\l\m} W_{\l\m\gamma\d},
\qquad
\quad W^\star{}_{\a\b\gamma\d} := \frac{1}{2} \e_{\gamma\d}{}^{\l\m} W_{\a\b\l\m},
$$
are equal, and are also Weyl fields. Finally, we define the {\sl Bel--Robinson tensor of the Weyl field}
to be
$$
Q := Q[\bW]_{\a\b\gamma\d} :=
W_{\a\l\gamma\m} W_\b{}^\l{}_\d{}^\m +
{^\star W}_{\a\l\gamma\m} {^\star W}_\b{}^\l{}_\d{}^\m.
$$

We assume that the spacetime $\Mcal$ is endowed with a foliation as stated in the introduction, 
and we define the {\sl total Bel--Robinson energy of the Weyl field} at time $t$
to be
$$
\Q[\bW]_{\Hbf_t} := \int_{\Hbf_t} Q[\bW](\bN,\bN,\bN,\bN) \, dV_{\g_t}.
$$
In the special case that the Bel--Robinson tensor associated with $\bW$ is divergence-free,
we immediately obtain an energy inequality for $\Q[\bW]$ along the same lines of the proof of
Theorem~\ref{BR-estimate},
specifically we obtain
\[
\Q[\bW]_{\Hbf_{t_2}} \leq \Q[\bW]_{\Hbf_{t_1}} e^{3 K_{n \bfpi}(t_1, t_2)} , \qquad t_1 \leq t_2.
\]
Observe that we no longer require that the background metric is Ricci-flat.
However, in general, $\operatorname{div} (Q[\bW]) \neq 0$ and, instead, we have the following conclusion.

\begin{theorem}[Energy inequality for Weyl fields under a one-sided bound]
\label{Weyl-estimate}
Let $(\Mcal, \g)$ be a spacetime endowed with a foliation $(\Hbf_t)_{t \in I}$ with lapse function $n$
and deformation tensor $\bfpi$ and, for all $t_1, t_2 \in I$ with $t_1 \leq t_2$, define
$$
K_{n \bfpi}(t_1,t_2) : = \int_{t_1}^{t_2} \sup_{\Hbf_t}  \rho(n \bfpi)  \, dt,
$$
where $\rho(n \bfpi)$ is the critical root of $P_{n \bfpi}$ (cf.~Definition~\ref{rho}).
Then, any Weyl field  $\bW$ defined on $\Mcal$ satisfies the energy inequality
\begin{align*}
\Q[\bW]_{\Hbf_{t_2}} \leq& \Q[\bW]_{\Hbf_{t_1}} e^{3 K_{n \bfpi}(t_1, t_2)}
 - \int_{t_1}^{t_2} e^{3 K_{n \bfpi}(t, t_2)}
\int_{\Hbf_t} n (\operatorname{div} Q[\bW])_{000} \, dV_{\bg_t}  dt, \qquad t_1 \leq t_2.
\end{align*}
\end{theorem}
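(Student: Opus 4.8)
The plan is to repeat, almost verbatim, the Stokes-theorem argument from the proof of Theorem~\ref{BR-estimate}, carrying along the extra contribution coming from $\operatorname{div}Q[\bW]$, and then to close the resulting estimate by an integrating-factor (Grönwall) step instead of a plain integration.

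First I would form the vector field $\bP_\a := Q[\bW]_{\a\b\gamma\d}N^\b N^\gamma N^\d$ on the manifold with boundary $\M_{[t_1,t_2]} := \bigcup_{t\in[t_1,t_2]}\Hbf_t$ and compute $\bD^\a\bP_\a$. By total symmetry of $Q[\bW]$ and the identity $\pi_{\a\b} = (\L_\bN\g)_{\a\b} = \bD_\a N_\b + \bD_\b N_\a$, together with the fact that $Q[\bW]_{\a\b\gamma\d}N^\gamma N^\d$ is symmetric in $\a,\b$ (so only the symmetric part of $\bD^\a N^\b$ contributes), one finds, in an orthonormal frame with $\bfe_0 = \bN$,
$$
\bD^\a\bP_\a = (\operatorname{div}Q[\bW])_{\b\gamma\d}N^\b N^\gamma N^\d + \tfrac32 Q[\bW]_{\a\b\gamma\d}\pi^{\a\b}N^\gamma N^\d = (\operatorname{div}Q[\bW])_{000} + \tfrac32 Q[\bW]_{\a\b00}\pi^{\a\b}.
$$
When $\operatorname{div}Q[\bW] = 0$ this reduces to the expression used for the Bel--Robinson tensor of a Ricci-flat spacetime. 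Applying the divergence theorem on $\M_{[t_1,t_2]}$, whose causal boundary consists of $\Hbf_{t_1}$ and $\Hbf_{t_2}$ with flux $\Q[\bW]_{\Hbf_t}$ through each leaf, and using $dV_\g = n\,dt\,dV_{\g_t}$, gives the integral identity
$$
\Q[\bW]_{\Hbf_{t_2}} = \Q[\bW]_{\Hbf_{t_1}} - \int_{t_1}^{t_2}\!\!\int_{\Hbf_s}\Big(\tfrac32 Q[\bW]_{\a\b00}\,n\pi^{\a\b} + n(\operatorname{div}Q[\bW])_{000}\Big)\,dV_{\g_s}\,ds.
$$

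Second, I would observe that the algebraic inequality~\eqref{345}, and hence Propositions~\ref{REH-two} and~\ref{pi3}, carries over unchanged to an arbitrary Weyl field $\bW$ on $(\Mcal,\g)$: the only structural facts used there are that the electric and magnetic parts $\bE,\bH$ of $\bW$ relative to $\bN$ are symmetric, tangent to $\Hbf_t$, and trace-free; symmetry and tangency follow from the Riemann-type symmetries of $\bW$, while trace-freeness is precisely the defining condition $W^\a{}_{\b\a\d} = 0$ and is independent of the Ricci curvature of the background. Thus, on every slice,
$$
-\tfrac12 Q[\bW]_{\a\b00}\,n\pi^{\a\b} \le \rho(n\bfpi)\,Q[\bW]_{0000} = \rho(n\bfpi)\,Q[\bW](\bN,\bN,\bN,\bN).
$$
Combining this with the integral identity, and writing $E(t) := \Q[\bW]_{\Hbf_t}$ and $D(t) := \int_{\Hbf_t} n(\operatorname{div}Q[\bW])_{000}\,dV_{\g_t}$, differentiation at an arbitrary $t = t_1 \in I$ (exactly as in the proof of Theorem~\ref{BR-estimate}) yields the differential inequality
$$
E'(t) \le 3\Big(\sup_{\Hbf_t}\rho(n\bfpi)\Big)E(t) - D(t), \qquad t\in I.
$$
With $g(t) := 3\sup_{\Hbf_t}\rho(n\bfpi)$, multiplying by the integrating factor $\mu(t) := \exp\big(-\!\int_{t_1}^t g\big)$ turns this into $\frac{d}{dt}\big(\mu(t)E(t)\big)\le -\mu(t)D(t)$; integrating from $t_1$ to $t_2$, dividing by $\mu(t_2)$, and using $\int_{t_1}^{t_2}g = 3K_{n\bfpi}(t_1,t_2)$ and $\int_t^{t_2}g = 3K_{n\bfpi}(t,t_2)$ produces exactly the asserted estimate.

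I expect the main obstacle to be the bookkeeping in the divergence computation: getting the factor $\tfrac32$, the overall sign, and the reduction of $\bD^\a N^\b$ to its symmetric part $\tfrac12\pi^{\a\b}$ all consistent with the conventions of Section~\ref{section2}, and justifying the boundary-flux identity for $\bP$ (which relies on the positivity/dominant-energy property of $Q[\bW]$ and on the assumed regularity and spatial decay of $\bW$, so that no flux escapes at spacelike infinity). By contrast, the transfer of the algebraic inequality to general Weyl fields is immediate once observed, and the Grönwall step is entirely routine.
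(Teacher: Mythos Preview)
Your proposal is correct and follows essentially the same approach as the paper's proof: derive the energy identity via Stokes' theorem (the paper's equation~\eqref{QW}), invoke the algebraic inequality~\eqref{345}, pass to the differential inequality, and close by the integrating-factor/Gr\"onwall argument. You are somewhat more explicit than the paper in justifying why~\eqref{345} transfers from $\bR$ to a general Weyl field and in spelling out the divergence computation, but these are elaborations of the same argument rather than a different route.
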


\begin{proof}  Recalling that $\operatorname{div} (Q[\bW]) \neq 0$ in general,
we find the energy identity
\begin{align}
\Q[\bW]_{\Hbf_{t_2}} = \Q[\bW]_{\Hbf_{t_1}}
&- \frac{3}{2} \int_{t_1}^{t_2} \int_{\Hbf_t} n \, Q_{\a\b00} \pi^{\a\b} \, dV_{\bg_t} dt
\nonumber\\
&- \int_{t_1}^{t_2} \int_{\Hbf_t} n \, (\operatorname{div} Q[\bW])_{000} \, dV_{\bg_t} dt
\label{QW}
\end{align}
for all $t_1, t_2 \in I$ with $t_1 \le t_2$.
Arguing as in the passage after Theorem~\ref{BR-estimate}, we deduce from the inequality~\eqref{QW} that the derivative of $\Q$ at $t_1 \in I$ satisfies
\[
\left. \frac{d}{dt} \Q[\bW]_{\Hbf_t} \right|_{t=t_1}
\le 3 \sup_{\Hbf_{t_1}} \left( \rho(n \bfpi) \right) \Q[\bW]_{\Hbf_{t_1}}
- \int_{\Hbf_{t_1}} n \, (\operatorname{div} Q[\bW])_{000} \, dV_{\bg_{t_1}}
\]
and, therefore, for all $t_1 \leq t$
\[
\frac{d}{dt} \left( e^{- 3 K_{n \bfpi}(t_1, t)} \Q[\bW]_{\Hbf_t} \right)
\le - e^{- 3 K_{n \bfpi}(t_1, t)} \int_{\Hbf_t} n \, (\operatorname{div} Q[\bW])_{000} \, dV_{\bg_t}.
\]
Integrating in $t \in [t_1, t_2]$, we deduce that
\begin{align*}
e^{- 3 K_{n \bfpi}(t_1, t_2)} \Q[\bW]_{\Hbf_{t_2}} - \Q[\bW]_{\Hbf_{t_1}} &\le
- \int_{t_1}^{t_2} e^{- 3 K_{n \bfpi}(t_1, t)} \int_{\Hbf_t} n \, (\operatorname{div} Q[\bW])_{000} \, dV_{\bg_t} dt,
\end{align*}
as required.
\end{proof}

\begin{remark} \rm We cannot proceed any further without placing additional restrictions on the divergence of the Bel--Robinson tensor of the Weyl field, a quantity which, in the context of the Einstein equations, is determined by the matter content of the spacetime under consideration. Such terms with non-vanishing divergence do arise, even in the context of the {\sl vacuum} Einstein equations, as error terms generated by commuting Bianchi equations with vector fields (as developed in~\cite{CK}). 
\end{remark}

%======================================================================

\section{Maxwell and Yang--Mills fields on one-sided bounded spacetimes}
\label{section3}

\subsection{Maxwell equations}

We now consider a vacuum Maxwell field, represented by its curvature (or field strength)
$\bfF$, defined on a four-dimensional Lorentzian manifold $(\Mcal, \g)$ foliated as before.
By definition, $F$ satisfies the vacuum Maxwell equations
\be
\label{Maxeq}
d\bfF = 0, \qquad d \star \bfF = 0.
\ee
The energy-momentum tensor of the Maxwell field takes the form
$$
T_{\alpha\beta} = F_{\alpha\gamma} F_\beta{}^\gamma
                               - \frac{1}{4} \, g_{\alpha\beta} |\mathbf{F}|_{\g}^2.
$$
Letting $\mathbf{e}^{\alpha}$ be an orthonormal coframe, with dual frame
$\mathbf{e}_{\alpha}$, where we take $\mathbf{e}_0 = \bN$, then we divide the
components of $\mathbf{F}$ into electric and magnetic parts:
$$
E^i = F^{0i}, \qquad F^{ij} = \epsilon_{ijk} B_k.
$$
We define the {\sl total energy of the Maxwell field} at time $t$ to be
\[
\Q^{\text{MW}}[\bfF]_{\Hbf_{t}}
 := \int_{\Hbf_t} \T(\bN, \bN) dV_{\g_t} = \frac{1}{2} \int_{\Hbf_t} \left( |\bE|^2 + |\bB|^2 \right) dV_{\g_t}.
\]

A straightforward calculation in the above orthonormal frame then shows that
$$
\la \T, \bfpi \ra = \frac{1}{2} \tr\bfpi \left( |\bE|^2 + |\bB|^2 \right) -
\pi^{ij} \left( E_i E_j + B_i B_j \right) - 2 \pi^{0i} \left( \bE \times \bB
\right)_i
$$
Letting $S_{ij} := \epsilon_{ijk} \pi^{0k}$ and noting that $\pi_{ij} = - 2 k_{ij}$,
we deduce that
\begin{align*}
\la \T, \bfpi \ra
&= - \tr\bk \left( |\bE|^2 + |\bB|^2 \right) + 2 k^{ij} \left( E_i E_j + B_i B_j
\right) - 2 S_{jk} E_j B_k
\\
&= \tr \left[ \left( \bE^2 + \bB^2 \right) \left( 2 \bk - \tr\bk \, \id_3 \right) \right]
 - 2 \tr \left[ \bE S \bB \right].
\end{align*}
As before, letting $\Lambda(\bk) := (\tr\bk) \, \id_3 - 2 \bk$,
and noting that $S$ is skew-symmetric, we deduce that
$$
- \frac{1}{2} \la \T, \bfpi \ra
= \frac{1}{2} \tr \left( \bE \Lambda \bE + \bB \Lambda \bB + \bE S
\bB - \bB S \bE  \right).
$$
We therefore have
$$
- \frac{1}{2} \la \T, \bfpi \ra
= \tr \left(
\frac{1}{2} \begin{pmatrix} \bE, \bB \end{pmatrix}
\begin{pmatrix} \Lambda & S \\ - S & \Lambda \end{pmatrix}
\begin{pmatrix} \bE \\ \bB \end{pmatrix} \right),
$$
which leads to similar formulas to the ones found with the Bel--Robinson tensor.

We therefore define
$$
\widetilde{\Pi} := n \begin{pmatrix} \Lambda & S \\ - S & \Lambda \end{pmatrix}.
$$
Note, from the explicit form of the characteristic polynomial of $\Pi$ given in~\eqref{detPi}
and~\eqref{g-polynomial}, that changing the sign of the matrix $S$ does not
affect the eigenvalues of the matrix $\Pi$. It follows that the eigenvalues of
the matrix $\widetilde{\Pi}$ coincide with those of $\Pi$. We therefore deduce
that
$$
- \frac{1}{2} n \la \T, \bfpi \ra
\le \frac{1}{2} \max(a_i) \tr \left(
\begin{pmatrix} \bE, \bB \end{pmatrix}
\begin{pmatrix} \bE \\ \bB \end{pmatrix} \right)
= \max(a_i) \, T_{00}.
$$
Following the same proof as in our estimate for the Bel--Robinson energy,
we therefore deduce the following result.

\begin{theorem}[Energy inequality for Maxwell fields under a one-sided bound]
Let $(\Mcal, \g)$ be a spacetime endowed with a foliation $(\Hbf_t)_{t \in I}$ with lapse function $n$
and deformation tensor $\bfpi$ and, for all $t_1, t_2 \in I$ with $t_1 \leq t_2$, define
$$
K_{n \bfpi}(t_1,t_2) : = \int_{t_1}^{t_2} \sup_{\Hbf_t}  \rho(n \bfpi)  \, dt,
$$
where $\rho(n \bfpi)$ is the critical root of $P_{n \bfpi}$ (cf.~Definition~\ref{rho}).
Then, any vacuum Maxwell field  $\bW$ defined on $\Mcal$ satisfies the energy inequality
\[
\Q^{\mathrm{MW}}[\bfF]_{\Hbf_{t_2}} \le
\Q^{\mathrm{MW}}[\bfF]_{\Hbf_{t_1}} \, e^{K_{n \bfpi}(t_1,t_2)}, \qquad t_1 \leq t_2.
\]
\end{theorem}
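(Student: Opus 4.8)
The plan is to transcribe the proof of Theorem~\ref{BR-estimate}, replacing the Bel--Robinson tensor $Q[\bR]$ by the energy--momentum tensor $\T$ of the Maxwell field. Two ingredients are needed. The first is the divergence-free property $\bD^\alpha T_{\alpha\beta}=0$, which is the classical consequence of the vacuum Maxwell equations $d\bfF=0$ and $d\star\bfF=0$; in contrast with the analogous property of $Q[\bR]$, it requires no curvature hypothesis on the background, so no error term of the kind arising in Theorem~\ref{Weyl-estimate} is produced. The second is the algebraic inequality
\[
- \tfrac{1}{2}\, n \, \la \T, \bfpi \ra \,\le\, \rho(n \bfpi)\, T_{00},
\]
which has essentially been obtained in the discussion preceding the statement: $-\tfrac12 n \la\T,\bfpi\ra$ is a quadratic expression in $(\bE,\bB)$ governed by the symmetric $6 \times 6$ matrix $\widetilde\Pi$, and $\widetilde\Pi$ differs from the matrix $\Pi$ of Proposition~\ref{pi3} only through the sign of the off-diagonal block $S$; by the explicit characteristic polynomial \eqref{detPi}--\eqref{g-polynomial} this sign change leaves the spectrum unchanged, so that the largest eigenvalue of $\widetilde\Pi$ is again $\rho(n\bfpi)$. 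Bounding the symmetric Rayleigh quotient by $\rho(n\bfpi)$ and using $T_{00}=\tfrac12(|\bE|^2+|\bB|^2)$ then gives the displayed inequality.

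With these in place, I would apply Stokes' theorem to the future-directed vector field $\bP_\alpha:=T_{\alpha\beta}N^\beta$ on $\M_{[t_1,t_2]}$, exactly as in the passage following Theorem~\ref{BR-estimate}. Since $\bD^\alpha T_{\alpha\beta}=0$, $\L_\bN\g=\bfpi$, and $\T$ is symmetric, the divergence is $\bD^\alpha\bP_\alpha=T_{\alpha\beta}\bD^\alpha N^\beta=\tfrac12 T_{\alpha\beta}\pi^{\alpha\beta}$; only a single factor appears here, in contrast with the factor $\tfrac32$ produced by the three copies of $N$ in the Bel--Robinson case, which is why the final exponent is $K_{n\bfpi}$ and not $3K_{n\bfpi}$. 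The boundary contributions equal $\Q^{\mathrm{MW}}[\bfF]_{\Hbf_{t_1}}-\Q^{\mathrm{MW}}[\bfF]_{\Hbf_{t_2}}$ because $\bP_\alpha N^\alpha=T(\bN,\bN)$, and writing $dV_\g=n\,dV_{\g_t}\,dt$ turns the identity into
\[
\Q^{\mathrm{MW}}[\bfF]_{\Hbf_{t_2}} = \Q^{\mathrm{MW}}[\bfF]_{\Hbf_{t_1}} - \tfrac12\int_{t_1}^{t_2}\int_{\Hbf_s} n\,\la\T,\bfpi\ra\,dV_{\g_s}\,ds.
\]
Invoking the algebraic inequality together with the pointwise positivity $T(\bN,\bN)\ge0$ — so that $\rho(n\bfpi)\,T_{00}\le\sup_{\Hbf_s}\rho(n\bfpi)\,T_{00}$ regardless of the sign of the supremum — gives
\[
\Q^{\mathrm{MW}}[\bfF]_{\Hbf_{t_2}} \le \Q^{\mathrm{MW}}[\bfF]_{\Hbf_{t_1}} + \int_{t_1}^{t_2}\sup_{\Hbf_s}\rho(n\bfpi)\,\Q^{\mathrm{MW}}[\bfF]_{\Hbf_s}\,ds.
\]
Differentiating at $t=t_1$, using that $t_1\in I$ is arbitrary, and applying Gr\"onwall's lemma then yields $\Q^{\mathrm{MW}}[\bfF]_{\Hbf_{t_2}}\le\Q^{\mathrm{MW}}[\bfF]_{\Hbf_{t_1}}\,e^{K_{n\bfpi}(t_1,t_2)}$.

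I do not anticipate a serious obstacle: the argument is a transcription of the proof of Theorem~\ref{BR-estimate}, and the only genuinely new algebraic point — that the replacement $S\mapsto-S$ does not change the spectrum of the relevant matrix — has already been isolated in the discussion above. The steps most deserving of care are the verification of $\bD^\alpha T_{\alpha\beta}=0$ in the orthonormal frame being used, the correct bookkeeping of the numerical factor $\tfrac12$ versus $\tfrac32$, and the orientation conventions for Stokes' theorem on the Lorentzian cylinder $\M_{[t_1,t_2]}$; none of these introduces any real difficulty once the template of Theorem~\ref{BR-estimate} is fixed.
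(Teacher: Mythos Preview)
Your proposal is correct and matches the paper's approach exactly: the paper establishes the algebraic inequality $-\tfrac12 n\la\T,\bfpi\ra \le \rho(n\bfpi)\,T_{00}$ via the observation that $\widetilde\Pi$ and $\Pi$ share the same spectrum (since the characteristic polynomial \eqref{g-polynomial} is even in $a,b,c$), and then simply states that the result follows ``following the same proof as in our estimate for the Bel--Robinson energy.'' Your identification of the factor $\tfrac12$ (in place of $\tfrac32$) as the reason for the exponent $K_{n\bfpi}$ rather than $3K_{n\bfpi}$ is the one bookkeeping point to get right, and you have it.
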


%-------------------------------------------------------------------------------------------

\subsection{Yang--Mills equations}

As should be clear, the calculations for the Maxwell field may easily be
generalized to the case of Yang--Mills fields $F$ with compact gauge group, $G$.
In this case, the field strength is a two-form taking values in the associated Lie algebra
$\mathfrak{g}$. Since $G$ is compact, the Cartan--Killing form
$\boldsymbol{\kappa}$ is negative definite. Letting $T^i$ ($i = 1, \dots, \dim G$)
be an orthonormal basis for $\mathfrak{g}$ with respect to the Cartan--Killing
form, then the energy-momentum tensor of the Yang--Mills field takes the form
\be
\label{enYM}
T_{\alpha\beta} =
\sum_{i=1}^{\dim G}
\left( F_{\alpha\gamma}{}^i F_\beta{}^\gamma{}^i - \frac{1}{4} g_{\alpha\beta} F_{\gamma\delta}{}^i F^{\gamma\delta}{}^i \right).
\ee

The electric and magnetic parts of the curvature, along with the Yang--Mills energy are defined
in precise parallel with the case of Maxwell fields. As for Maxwell fields, we find that
$$
- \frac{1}{2} n \la \T, \bfpi \ra
= \frac{1}{2} n \sum_{i=1}^{\dim G} \tr \left(
\begin{pmatrix} \bE^i, \bB^i \end{pmatrix}
\begin{pmatrix} \Lambda & S \\ - S & \Lambda \end{pmatrix}
\begin{pmatrix} \bE^i \\ \bB^i \end{pmatrix} \right).
$$
From this result, again following the proof from the Bel--Robinson energy estimate and denoting
by $\Q^{\text{YM}}[\bfF]_{\Hbf_{t}} $ the {\sl total energy of the Yang-Mills field} $\bfF$ at time $t$, we deduce the following result.

\begin{theorem}[Energy inequality for Yang--Mills fields under a one-sided bound]
Let $(\Mcal, \g)$ be a spacetime endowed with a foliation $(\Hbf_t)_{t \in I}$ with lapse function $n$
and deformation tensor $\bfpi$ and, for all $t_1, t_2 \in I$ with $t_1 \leq t_2$, define
$$
K_{n \bfpi}(t_1,t_2) : = \int_{t_1}^{t_2} \sup_{\Hbf_t}  \rho(n \bfpi)  \, dt,
$$
where $\rho(n \bfpi)$ is the critical root of $P_{n \bfpi}$ (cf.~Definition~\ref{rho}).
Then, any Yang--Mills field (with compact gauge group) $\bfF$ defined on $\Mcal$ satisfies the energy inequality
\[
\Q^{\mathrm{YM}}[\bfF]_{\Hbf_{t_2}}
\le \Q^{\mathrm{YM}}[\bfF]_{\Hbf_{t_1}}
 \, e^{K_{n\bfpi}(t_1,t_2)}, \qquad t_1 \leq t_2.
\]
\end{theorem}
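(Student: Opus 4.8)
The plan is to reproduce the argument that yielded Theorem~\ref{BR-estimate} and, more directly, its Maxwell counterpart, since the decisive algebraic identity
\[
-\tfrac12\, n\,\la\T,\bfpi\ra = \tfrac12\, n\sum_{i=1}^{\dim G}\tr\!\left(\begin{pmatrix}\bE^i, \bB^i\end{pmatrix}\begin{pmatrix}\Lambda & S \\ -S & \Lambda\end{pmatrix}\begin{pmatrix}\bE^i \\ \bB^i\end{pmatrix}\right)
\]
has already been recorded above. The only genuinely new input required is a conservation law: for a Yang--Mills field $\bfF$ solving the field equations (together with the covariant Bianchi identity), the energy-momentum tensor $\T$ of \eqref{enYM} is symmetric, trace-free, and \emph{divergence-free}, $\bD^\alpha T_{\alpha\beta}=0$. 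I would establish this exactly as one does for a single Maxwell field, applied componentwise and then summed against the Cartan--Killing form $\boldsymbol{\kappa}$: ad-invariance of $\boldsymbol{\kappa}$ is what permits the Lie-algebra trace to commute with $\bD$, and compactness of $G$ (so that $\boldsymbol{\kappa}$ is definite) is what makes the ``squares'' $|\bE^i|^2$, $|\bB^i|^2$, and hence $\Q^{\mathrm{YM}}$, non-negative with the sign conventions of the excerpt.

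Granting this, I would follow the passage after Theorem~\ref{BR-estimate} and apply Stokes' theorem to the vector field $\bP_\alpha := T_{\alpha\beta} N^\beta$ on $\M_{[t_1,t_2]} := \bigcup_{t\in[t_1,t_2]}\Hbf_t$. On each slice one has $\bP_\alpha N^\alpha = \T(\bN,\bN)$, which is the integrand of $\Q^{\mathrm{YM}}$, while $\bD^\alpha\bP_\alpha = T_{\alpha\beta}\bD^\alpha N^\beta = \tfrac12\,T_{\alpha\beta}\pi^{\alpha\beta} = \tfrac12\,\la\T,\bfpi\ra$, using the divergence-free property and the symmetry of $\T$. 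Writing $dV_\g = n\, dt\wedge dV_{\g_t}$ with the orientation convention fixed in Section~\ref{section23}, this gives the energy identity
\[
\Q^{\mathrm{YM}}[\bfF]_{\Hbf_{t_2}} = \Q^{\mathrm{YM}}[\bfF]_{\Hbf_{t_1}} + \int_{t_1}^{t_2}\!\!\int_{\Hbf_s}\Big({-}\tfrac12\, n\,\la\T,\bfpi\ra\Big)\,dV_{\g_s}\,ds .
\]

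Next I would bound the right-hand side using the displayed identity: each of the $\dim G$ summands is of the form $\tfrac12\tr\!\big(\,\cdot\,\widetilde\Pi\,\cdot\,\big)$ with the \emph{same} matrix $\widetilde\Pi = n\begin{pmatrix}\Lambda & S \\ -S & \Lambda\end{pmatrix}$, so each is at most $\max_j a_j$ times $\tr\!\left(\begin{pmatrix}\bE^i,\bB^i\end{pmatrix}\begin{pmatrix}\bE^i\\\bB^i\end{pmatrix}\right)$, where $a_1\le a_2\le a_3$ are the eigenvalues of $\widetilde\Pi$. The point is that $\widetilde\Pi$ differs from the matrix $\Pi$ of Proposition~\ref{pi3} only in the sign of $S$, which by \eqref{detPi}--\eqref{g-polynomial} leaves the characteristic polynomial $P_{n\bfpi}$ unchanged, so $\max_j a_j = \rho(n\bfpi)$ \emph{uniformly in $i$} and may be pulled out of the sum. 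Summing over $i$ and using $\tfrac12\sum_i(|\bE^i|^2+|\bB^i|^2) = \T(\bN,\bN)$ yields the pointwise bound $-\tfrac12\, n\,\la\T,\bfpi\ra \le \rho(n\bfpi)\,\T(\bN,\bN)$, hence
\[
\Q^{\mathrm{YM}}[\bfF]_{\Hbf_{t_2}} \le \Q^{\mathrm{YM}}[\bfF]_{\Hbf_{t_1}} + \int_{t_1}^{t_2}\sup_{\Hbf_s}\rho(n\bfpi)\;\Q^{\mathrm{YM}}[\bfF]_{\Hbf_s}\,ds .
\]
Exactly as after Theorem~\ref{BR-estimate}, differentiating at the lower endpoint (licit by the standing regularity assumption on $\bfF$) gives $\tfrac{d}{dt}\Q^{\mathrm{YM}}[\bfF]_{\Hbf_t}\le \sup_{\Hbf_t}\rho(n\bfpi)\,\Q^{\mathrm{YM}}[\bfF]_{\Hbf_t}$, and a Gr\"onwall argument then produces $\Q^{\mathrm{YM}}[\bfF]_{\Hbf_{t_2}}\le \Q^{\mathrm{YM}}[\bfF]_{\Hbf_{t_1}}\,e^{K_{n\bfpi}(t_1,t_2)}$, as asserted.

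The main obstacle is bookkeeping rather than analysis: everything reduces to identities already present in the paper, and the single step that must be carried out with care is the conservation law $\bD^\alpha T_{\alpha\beta}=0$ for the Yang--Mills stress tensor — requiring the Yang--Mills equations, the Bianchi identity, and ad-invariance of $\boldsymbol{\kappa}$ — together with careful tracking of the sign conventions coming from $\boldsymbol{\kappa}$ being negative definite, so that the fields' electric and magnetic energies and $\Q^{\mathrm{YM}}$ all come out with the correct sign.
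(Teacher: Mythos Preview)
Your argument is correct and follows essentially the same route as the paper, which simply records the algebraic identity for $-\tfrac12 n\la\T,\bfpi\ra$ as a sum over $i=1,\dots,\dim G$ of quadratic forms in $\widetilde\Pi$ and then says ``again following the proof from the Bel--Robinson energy estimate'' to conclude. If anything, you are more explicit than the paper: your remarks on why $\bD^\alpha T_{\alpha\beta}=0$ (ad-invariance of $\boldsymbol{\kappa}$, Bianchi, Yang--Mills equations) and on why compactness of $G$ ensures non-negativity of $\Q^{\mathrm{YM}}$ are details the paper leaves implicit.
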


%-------------------------------------------------------------------------------------

\subsection{Scalar wave equation}

We now discuss the scalar wave equation, with the intent of also sharpening the standard energy inequality.
As we shall discover, however, this equation does not have a ``sufficiently rich'' algebraic structure for an analogue of Theorem~\ref{BR-estimate} to hold.

As before, $(\Mcal, \g)$ is a Lorentzian manifold, but we allow it to be of arbitrary dimension $d+1$,
and we assume a foliation by level sets $\Hbf_t$ of a time-function, as described in the introduction.
Let $\varphi \colon \Mcal \to \RR$
be a real-valued scalar field on $\Mcal$ that satisfies the wave equation
\be
\label{waveeq}
\Box_{\bg} \varphi = 0. 
\ee
Recall here that all solutions under consideration are smooth and have sufficient decay at spatial infinity. 
The {\sl energy-momentum tensor of the scalar field} $\varphi$ reads
$$
\T := d\varphi \otimes d\varphi - \frac{1}{2} \langle d\varphi, d\varphi \rangle \, \g,
$$
and the {\sl total energy of the scalar field} at time $t$ is
$$
\Q^{\mathrm{WE}}[\varphi]_{\Hbf_t}
 := \int_{\Hbf_t} \T( \bN, \bN ) \, dV_{\g_t}
= \frac{1}{2} \int_{\Hbf_t} |\bD \varphi|_{\gN}^2 \, dV_{\g_t}.
$$
We then have for $t_1 \leq t$
$$
\aligned
\Q^{\mathrm{WE}}[\varphi]_{\Hbf_{t}}
 -
\Q^{\mathrm{WE}}[\varphi]_{\Hbf_{t_1}}
 &=
- \int_{\M_{[t_1, t]}} D_\alpha \left( T^{\alpha\beta} N_\beta \right) dV_{\g}
= - \frac{1}{2} \int_{\M_{[t_1, t]}} \la \T, \bfpi \ra_{\g} \, dV_{\g},
\endaligned
$$
where $\la \T, \bfpi \ra_{\g}$ denotes the inner product (of the $(0, 2)$ tensor
fields) with respect to the Lorentzian metric $\g$.

Let us first recall the standard inequality.
We would normally estimate the second integral by
$$
\int_{\M_{[t_1, t]}} \langle \T, \bfpi \rangle \, dV_{\g} \le
\int_{\M_{[t_1, t]}} |\T|_{\gN} |\bfpi|_{\gN} \, dV_{\g}
$$
and, since
$|\T|_{\gN} \lesssim |\bD \varphi|_{\gN}^2$, deduce that
\begin{align*}
\left| \int_{\M_{[t_1, t]}} \langle \T, \bfpi \rangle \, dV_{\g} \right| &\lesssim
\int_{\M_{[t_1, t]}} |\bD \varphi(t)|_{\gN}^2 |\bfpi(t)|_{\gN} \, dV_{\g}
= \int_{t_1}^t \int_{\Hbf_s} |\bD \varphi|_{\gN}^2 |\bfpi|_{\gN}  n \, dV_{\g_s} ds
\\
&\le \int_{t_1}^t \sup_{\Hbf_s} \big( |n\bfpi| \big) \int_{\Hbf_s}
|\bD\varphi(s)|_{\gN}^2 \, ds
= 2 \int_{t_1}^t \sup_{\Hbf_s} \big( |n\bfpi| \big) \,  \Q^{\mathrm{WE}}[\varphi]_{\Hbf_s} \, ds.
\end{align*}
Gronwall's inequality leads us to the standard conclusion:
$$
\Q^{\mathrm{WE}}[\varphi]_{\Hbf_{t}} \lesssim \exp\Big( 2 \int_{t_1}^t \sup_{\Hbf_s} |n\bfpi|_\bg \, ds\Big)
\Q^{\mathrm{WE}}[\varphi]_{\Hbf_{t_0}}.
$$

Turning now to the derivation of a possibly weaker restriction on the deformation tensor,
we need to consider the quantity $\la \T, \bfpi \ra_{\g}$ more carefully. We have
$$
\aligned
\la \T, \bfpi \ra_{\g} &= T_{\alpha\beta} \pi^{\alpha\beta}
= \pi^{\alpha\beta} \left( D_\alpha \varphi D_\beta \varphi - \frac{1}{2} g_{\alpha\beta} |D
\varphi|_{\g}^2 \right)
\\
&= \left( \pi^{\a\b} - \frac{1}{2} \left( g^{\gamma\d} \pi_{\gamma\d} \right) g^{\a\b} \right)
D_{\a} \varphi D_{\b} \varphi
\\
&= \left( \pi_{\alpha\beta} - \frac{1}{2} \left( g^{\gamma\d} \pi_{\gamma\d} \right) g_{\alpha\beta} \right)
\left( g^{\a\epsilon} D_{\epsilon} \varphi \right) \left( g^{\b\eta} D_{\eta} \varphi \right)
\\
&= \left( \bfpi - \frac{1}{2} \left( \tr_{\g} \bfpi \right) \g \right)
\la \grad_{\g} \varphi, \grad_{\g} \varphi\ra,
\endaligned
$$
where we have defined the vector field $\grad_{\g} \varphi $ by, for arbitrary vector fields $\mathbf{X}$,
$$
\la \mathbf{X}, \grad_{\g} \varphi \ra_{\g} = \mathbf{X}(\varphi).
$$
Let $L$ be the operator defined by
\begin{align*}
\la X, L Y \ra_{\gN} &=
n \left( \frac{1}{2} \left( \tr_{\g} \bfpi \right) \g - \bfpi \right)(X, Y),
\end{align*}
or, equivalently, in local coordinates,
$$
L^\alpha{}_\beta :=
n \, g_{\bN}^{\alpha\gamma} \left( \frac{1}{2} \left( \tr_{\g}\bfpi \right) \, g_{\beta\gamma} - \pi_{\beta\gamma} \right).
$$
In terms of an orthonormal frame ${\bfe}_\alpha$ with $\bfe_0 := \bN$, $L$ has components
$$
\aligned
L^0{}_0 &= - \frac{n}{2} (\tr_{\g}\bfpi), \qquad
L^i{}_0 = - n \pi_{i0},
\\
L^0{}_i &= - n \pi_{i0},
\qquad \qquad
L^i{}_j = n \left( \frac{1}{2} (\tr_{\g}\bfpi) \delta_{ij} - \pi_{ij} \right)
\endaligned
$$
and, in particular, $\tr L = \frac{n}{2} (d-3) \tr_{\g}\bfpi$.

We denote the eigenvalues of $L$ by $\lambda_0, \dots, \lambda_d$, and then deduce that
$$
\aligned
- \frac{1}{2} n \la \T, \bfpi \ra_{\g}
& = \frac{1}{2} \la \grad_{\g} \varphi, L \grad_{\g} \varphi \ra_{\gN}
\\
& \le
\frac{1}{2} \max(\lambda_{\alpha}) |\grad_{\g}\varphi|^2_{\gN} = \max(\lambda_{\alpha}) \, T_{00}.
\endaligned
$$
Our energy inequality for the wave equation $\Box_{\g} \varphi = 0$ therefore becomes
\begin{align*}
\Q^{\mathrm{WE}}[\varphi]_{\Hbf_{t}} - \Q^{\mathrm{WE}}[\varphi]_{\Hbf_{t_1}}
&= - \frac{1}{2} \int_{\M_{[t_1, t]}} \la \T, \bfpi \ra_{\g} \, dV_{\g}
= - \frac{1}{2} \int_{t_1}^t \int_{\Hbf_s} n \la \T, \bfpi \ra_{\g} \, dV_{\g_s} \, ds
\\
&\le \int_{t_1}^t \left( \int_{\Hbf_s} \max(\lambda_{\alpha}) \, T_{00} \, dV_{\g_s} \right) \, ds
\\
&\le \int_{t_1}^t \max_{\Hbf_s} ( \max(\lambda_{\alpha}(s)))
\left( \int_{\Hbf_s} T_{00}(s) \, dV_{\g_s} \right) ds
\\
&= \int_{t_1}^t \left( \max_{\Hbf_s} \left( \max(\lambda_{\alpha}(s)) \right) \right) \Q^{\mathrm{WE}}[\varphi]_{\Hbf_{s}} \, ds.
\end{align*}

As such, if we impose that the eigenvalues of the operator $L$ are uniformly bounded above, in the sense that
\[
\la X, L X \ra_{\gN} \le C |X|_{\gN}^2,
\]
then we have the inequality
$$
\Q^{\mathrm{WE}}[\varphi]_{\Hbf_{t}} - \Q^{\mathrm{WE}}[\varphi]_{\Hbf_{t_1}}
\le \int_{t_1}^t C(s) \, \Q^{\mathrm{WE}}[\varphi]_{\Hbf_{s}} \, ds.
$$
Since $\Q^{\mathrm{WE}}[\varphi]_{\Hbf_{s}} \ge 0$, it follows that we have the energy inequality
$$
\Q^{\mathrm{WE}}[\varphi]_{\Hbf_{t}} \le
\Q^{\mathrm{WE}}[\varphi]_{\Hbf_{t_1}} \exp \left( \int_{t_1}^t C(r) \, dr \right).
$$

As the following example shows, however, the condition that the operator $L$
is bounded above implies that the second fundamental form is bounded both above and below.
Consider the case where the
lapse, $n$, is constant on the hypersurfaces $\Hbf_t$, i.e.~$n = n(t)$.
The only non-vanishing components of the deformation tensor are
$\pi_{ij} = - 2 k_{ij}$. We then have
\begin{align*}
- \frac{1}{2} n \la \T, \bfpi \ra &=
n \left( k^{ij} - \frac{1}{2} \tr\mathbf{k} \, \delta^{ij} \right) D_i \varphi D_j \varphi
+ \frac{n}{2} \tr\mathbf{k} \left( D_0 \varphi \right)^2
\end{align*}
Let $\Lambda^{ij} := k^{ij} - \frac{1}{2} \tr\mathbf{k} \, \delta^{ij}$.
Observe that $\Lambda$ is a symmetric, real $(d \times d)$ matrix, which may,
without loss of generality, be assumed diagonal of the form
$\Lambda = \mathrm{diag}[\lambda_1, \dots, \lambda_d]$.
Observe also that $\tr\bk = - \frac{2}{d-2} \tr\Lambda$, and we then have
\begin{align*}
- \frac{n}{2} \la \T, \bfpi \ra &=
n \tr \left( \Lambda D\varphi \otimes D\varphi \right)
- \frac{n}{d-2} \tr\Lambda \left( D_0 \varphi \right)^2
\\
&= \sum_{i=1}^d \lambda_i \left( D_i \varphi \right)^2
- \frac{n}{d-2} \tr\Lambda \left( D_0 \varphi \right)^2
\\
&= \left( D_0 \varphi, \dots, D_d \varphi \right) A \left( D_0 \varphi, \dots, D_d \varphi \right)^t,
\end{align*}
where
\[
A := \mathrm{diag} \left[ - \frac{\lambda_1 + \dots + \lambda_d}{d-2}, \lambda_1, \dots, \lambda_d \right]
\]
In order to bound $- \frac{1}{2} n \la \T, \bfpi \ra$ above in terms of $T_{00}$, we need the eigenvalues of the matrix $A$ to be bounded above.
This implies that we require the existence of a constant $C$ such that
\bel{scbd1}
\lambda_1, \dots, \lambda_d \le C,
\ee
but also
\bel{scbd2}
\lambda_1 + \dots + \lambda_d \ge - C(d-2).
\ee
However, the bounds~\eqref{scbd1} and~\eqref{scbd2} together imply that
\[
\lambda_1, \dots, \lambda_d \ge - C (2d - 3).
\]
It follows that, since $\Lambda$ is bounded both above and below, that the second fundamental form of the foliation, $\bk$, must also be bounded both above and below.

As such, our method does not generalize to the scalar field
when we only have a one-sided bound on the deformation tensor.
In particular, our methods depend strongly on
the algebraic structure of the energy momentum tensor.
The trace-free nature of the Bel--Robinson tensor and
the energy-momentum tensor for Maxwell and Yang--Mills fields
suggests that one might fare better with a conformally coupled scalar field,
which satisfies
$$
\Box_{\g} \varphi + C(d) R_{\g} \varphi = 0,
$$
where $C(d) = \frac{1-d}{4d}$ and $ R_{\g}$ denotes the scalar curvature. The energy momentum tensor of such a field takes the form
\begin{align*}
T_{\alpha\beta}
=
& \left( 1 + 2 C(d) \right) D_a \varphi D_b \varphi
- \frac{1}{2} g_{\alpha\beta} \left( 1 + 4 C(d) \right) |\bD\varphi|_{\g}^2
\\
& + \frac{1}{2} g_{\alpha\beta}C(d) R_{\g} \varphi^2 - C(d) R_{\alpha\beta} \varphi^2
 + 2 C(d) \varphi D_\alpha D_\beta \varphi
- 2 C(d) g_{\alpha\beta} \varphi \Box_{\g} \varphi.
\end{align*}
Unfortunately, it is now not possible to control $- n \la \T, \bfpi \ra$
due to the terms of the form $\varphi D_\alpha D_\beta \varphi$ in the energy-momentum tensor.

%==========================================================================

\section{Example and conjecture}
\label{section4}

\subsection{Example of spacetimes}

Our estimate in Theorem~\ref{BR-estimate} shows that, under certain
one-sided bounds, the Bel--Robinson energy of a vacuum solution of the Einstein equations cannot grow too quickly. If one were to assume a two-sided bound on the deformation tensor, then one would find, by a similar argument, a lower bound on the Bel--Robinson energy, showing that the energy, in addition, cannot decrease too quickly under a two-sided bound. In this section, we show that one can always choose a Lorentzian metric that satisfies a one-sided bound of our type, but for which the Bel--Robinson energy dies off as quickly as desired. More precisely, we have the following result.

\begin{theorem}
Given arbitrary $\eps>0$ and $\lambda \in (0,1)$, there exists a foliated, four-dimensional, Lorentzian manifold $(\Mcal, \g)$, with $\Mcal= \bigcup_{t \in I} \Hbf_t$, with the property that for $t_0 \in I$
\[
\frac{\Q_{\Hbf_{t_0 + \eps}}}{\Q_{\Hbf_{t_0}}} = \lambda.
\]
Hence,  
the Bel--Robinson energy dies off by an arbitrarily small factor (i.e.~$\lambda$)
in an arbitrarily short time (i.e.~$\eps$).
\end{theorem}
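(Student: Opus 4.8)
The plan is to construct the desired spacetime explicitly as a warped-product-type metric, so that the Bel--Robinson energy can be computed in closed form and the ratio $\Q_{\Hbf_{t_0+\eps}}/\Q_{\Hbf_{t_0}}$ made equal to any prescribed $\lambda \in (0,1)$. A convenient starting point is a spatially homogeneous ansatz on $\M = I \times \Sigma$ (with $\Sigma$ either a compact quotient or $\RR^3$), writing $\g = -dt^2 + \sum_{i} a_i(t)^2 (\bfe^i)^2$ for a left-invariant coframe $\{\bfe^i\}$, or even the conformally-flat choice $\g = \Omega(t)^2(-dt^2 + \delta_{ij}dx^i dx^j)$. For such metrics the normal is $\bN = \partial_t$ (up to the lapse), the second fundamental form $\bk$ is diagonal with eigenvalues $k_i = -\dot a_i / a_i$, and $\nablabf n = 0$ when the lapse is constant in space. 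Thus the critical polynomial $P_{n\bfpi}$ from~\eqref{poly} reduces, by Corollary~\ref{pi1}, to the polynomial with roots $n\lambda_i$ where $\lambda_1 = -k_1+k_2+k_3$, etc. The one-sided bound is equivalent to $n\lambda_1 \le C$, i.e. to a one-sided bound $n(-k_1+k_2+k_3) \le C$, which we will verify directly on our explicit family.

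Next I would compute the Bel--Robinson energy along the foliation. On a Ricci-flat background, $\Q_{\Hbf_t} = \int_{\Hbf_t} (|\bE|^2 + |\bH|^2)\, dV_{\g_t}$ by Proposition~\ref{REH-two}, and the identity derived after Theorem~\ref{BR-estimate} gives
\[
\frac{d}{dt}\Q_{\Hbf_t} = -\tfrac{3}{2}\int_{\Hbf_t} n\, Q_{\a\b00}\,\pi^{\a\b}\, dV_{\g_t}.
\]
The strategy is to arrange that $-\tfrac12 n\, Q_{\a\b00}\pi^{\a\b}$ is pointwise a \emph{negative} multiple of $Q_{0000}$ of controllable size, so that $\Q_{\Hbf_t}$ decays like $\exp(-\int 3|\mu(t)|\,dt)$ for a profile $\mu(t)$ we choose. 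Concretely, one takes a metric in which $n\bk$ is (a negative multiple of) the identity — e.g. an expanding FLRW-type slicing with $k_i = -H(t)$ for a single Hubble function $H(t) > 0$ — so that $\lambda_i = H(t) > 0$ is the same for all $i$; but this gives growth, not decay, so instead I would pick the \emph{contracting} regime or tune the sign so that the relevant eigenvalue $n\lambda_1$ is as negative as we like while still satisfying the (trivially one-sided) upper bound $n\lambda_1 \le C$ for any fixed $C$ (which holds automatically once $n\lambda_1 \le 0$). Then $\Q_{\Hbf_{t_0+\eps}}/\Q_{\Hbf_{t_0}} = \exp(-3\int_{t_0}^{t_0+\eps}|n\lambda_1(t)|\,dt)$, and by choosing the profile of $|n\lambda_1|$ on $[t_0,t_0+\eps]$ with integral equal to $-\tfrac13\log\lambda$ we obtain exactly the ratio $\lambda$. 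One must also check that a genuine vacuum solution is not required — the theorem only asks for a Lorentzian manifold — but in fact even within vacuum Bianchi models (e.g. Kasner-type metrics, suitably reparametrized) one can realize such a contracting profile, which makes the example sharper.

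The remaining bookkeeping is: (i) verify Ricci-flatness or, failing that, simply drop it since the statement does not demand it and the Bel--Robinson energy is still well-defined for the Weyl field by the last subsection of Section~2; (ii) confirm that $\Q_{\Hbf_t}$ is strictly positive along the foliation, so the ratio is meaningful — this requires the curvature to be nonzero, which is automatic for a nontrivially-warped metric; and (iii) confirm the one-sided bound $n(-k_1+k_2+k_3) + |\nablabf n| \le C$ holds, which is immediate for constant lapse and the chosen sign of $\bk$. The main obstacle is reconciling two competing demands: the one-sided bound must remain satisfied (fixing the \emph{sign} of the constrained combination of eigenvalues), while the energy must \emph{decay}, which pushes the \emph{other} direction of $\bk$ toward large magnitude. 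The resolution — and the conceptual content of the example — is precisely that the one-sided bound controls only $-k_1+k_2+k_3$ from above and leaves $-k_1+k_2+k_3$ free to be very negative (equivalently, leaves $k_1$ free to be very large positive while $k_2,k_3$ stay bounded), and it is exactly this unbounded-below freedom that drives the energy to zero; the construction simply exhibits a one-parameter family of metrics exploiting this, with the decay rate calibrated by an integral to hit the target $\lambda$ after time $\eps$.
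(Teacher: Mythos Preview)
Your ODE strategy has a genuine gap. The identity $\dot\Q_{\Hbf_t} = 3\int_{\Hbf_t}\mu\, Q_{0000}$ (with $\mu$ the pointwise ratio $(-\tfrac12 nQ_{\a\b00}\pi^{\a\b})/Q_{0000}$) is valid only when $\operatorname{div}Q[\bR]=0$, i.e.\ on a Ricci-flat background. You then propose to realize $\mu$ as the critical root $n\lambda_1$ by taking $n\bk$ proportional to the identity, so that $\Pi$ is scalar and the inequality~\eqref{a3eqn} becomes an equality. But a spatially isotropic $\bk$ with constant lapse forces the metric to be FLRW, hence conformally flat; under Ricci-flatness the Riemann tensor then vanishes and $\Q_{\Hbf_t}\equiv 0$, so the ratio is undefined. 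Dropping Ricci-flatness does not help: the Weyl tensor of FLRW is still zero, and for $Q[\bR]$ the divergence term in Theorem~\ref{Weyl-estimate} reappears and spoils your ODE. In short, the only regime where your exponential formula $\exp\bigl(-3\int|n\lambda_1|\bigr)$ is literally correct is the regime where both sides vanish.

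Once you pass to an anisotropic vacuum example such as Kasner (which you mention in passing), the identification $\mu=n\lambda_1$ fails: $\bE$ is not an eigenvector of $\Lambda(\bk)$ with the extremal eigenvalue, and $\mu$ becomes a fixed weighted average of the $n\lambda_i$ that you cannot prescribe. Concretely, for Kasner one finds $\mu(t)=-1/t$ regardless of the exponents, whereas $\rho(n\bfpi)=(2\gamma-1)/t$; so the ``freely chosen profile'' is not available. The paper's proof avoids the ODE altogether: it computes directly that $\Q_{\Hbf_t}=C\,t^{-3}$ on Kasner, and then uses the \emph{time-translation} parameter $a$ in $\g_a=-dt^2+\sum(t-a)^{2p_i}(dx^i)^2$ (not a choice of decay profile) to calibrate the ratio, picking $a=-\eps\lambda^{1/3}/(1-\lambda^{1/3})$ and $t_0=0$ so that $(-a)^3/(\eps-a)^3=\lambda$. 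That single translation parameter is the missing degree of freedom your argument needs.
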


\begin{proof}
Our manifold is based on the Kasner metric. Recall that the Kasner metric takes the form
\[
\g = -dt^2 + t^{2\alpha} dx^2 + t^{2\beta} dy^2 + t^{2\gamma} dz^2,
\]
where we take $t > 0$ and $(x, y, z)$ to be coordinates on a three-dimensional flat torus, $T^3$, with the property that $\int_{T^3} \, dx \, dy \, dz = 1$. We choose to take $\alpha \le \beta \le \gamma$. The above metric is then Ricci flat if the conditions
\[
\alpha + \beta + \gamma = \alpha^2 + \beta^2 + \gamma^2 = 1
\]
are satisfied. Excluding the case where $\alpha = \beta = 0$, $\gamma = 1$ which gives a flat metric, then we necessarily have $\alpha < 0$ and $\gamma > \tfrac{1}{3}$. A straightforward calculation shows that the eigenvalues of the second fundamental form of the surface $\Sigma_t$ (of constant $t$)
 are $-\alpha/t, -\beta/t, -\gamma/t$. As such, the first root is unbounded above as $t \to 0$, while the last is unbounded below. It turns out that the behaviour of the root $-\gamma/t$ will be the most relevant to our discussion.

Noting that the lapse of this metric is equal to $1$,
the operator $\Lambda(\bk)$ defined earlier has eigenvalues
\[
\lambda_1 := \frac{2 \alpha - 1}{t}, \qquad
\lambda_2 := \frac{2 \beta - 1}{t}, \qquad
\lambda_3 := \frac{2 \gamma - 1}{t}.
\]
It follows that the critical root is
\[
\rho(n \bfpi) = \frac{2 \gamma - 1}{t}.
\]
A calculation of the curvature then shows that the magnetic part of the Weyl tensor of the Kasner metric vanishes, while the electric part is given by
\[
t^4 \, |\bE|^2 = \alpha^2 (\alpha-1)^2 + \beta^2 (\beta-1)^2 + \gamma^2 (\gamma-1)^2.
\]
It follows that the Bel--Robinson energy is
\[
t^3 \, \Q_{\Hbf_t} = \alpha^2 (\alpha-1)^2 + \beta^2 (\beta-1)^2 + \gamma^2 (\gamma-1)^2, \qquad t > 0.
\]

Our metrics are constructed as follows. Let $\eps, \lambda > 0$ be given, with $\lambda < 1$, and define the quantity
\[
a := - \frac{\eps \lambda^{1/3}}{1 - \lambda^{1/3}} < 0. 
\]
In terms of $a$ and for all $t > a$, we define the translated metric
\[
\g_a := -dt^2 + \left( t - a \right)^{2\alpha} dx^2 + \left( t - a \right)^{2\beta} dy^2 + \left( t - a \right)^{2\gamma} dz^2.
\]
Then, the Bel--Robinson energy of the set $t = \mathrm{constant} > a$ is given by
\[
\Q^a_{\Hbf_{t}} = \frac{\alpha^2 (\alpha-1)^2 + \beta^2 (\beta-1)^2 + \gamma^2 (\gamma-1)^2}{\left( t - a \right)^3}.
\]
In particular, letting $t_0 = 0$, we find 
\begin{align*}
\frac{\Q^a_{\Hbf_{t_0+\eps}}}{\Q^a_{\Hbf_{t_0}}} = \frac{\left( - a \right)^3}{\left( \eps - a \right)^3}
&= \left(\frac{\eps \lambda^{1/3}}{1 - \lambda^{1/3}}\right)^3 \left( \eps + \frac{\eps \lambda^{1/3}}{1 - \lambda^{1/3}} \right)^{-3}
\\
&= \left(\frac{\eps \lambda^{1/3}}{1 - \lambda^{1/3}}\right)^3 \left( \frac{\eps}{1 - \lambda^{1/3}} \right)^{-3}
= \lambda,
\end{align*}
and it follows that the metric $\g_a$, with $t_0 = 0$, has the required property.
\end{proof}

\begin{remark}
The metric $\g_a$ has constant lapse, so the critical root of the deformation tensor is given by
\[
\rho(n \bfpi) = \frac{2 \gamma - 1}{t - a}.
\]
For $t \ge t_0 \ge 0$, we therefore deduce that
\[
\int_{t_0}^t \rho(n \bfpi)(s) \, ds = \left( 2 \gamma - 1 \right) \log \left| \frac{t-a}{t_0 - a} \right|.
\]
This is bounded for any finite $t \ge t_0$, and therefore our inequality in Theorem~\ref{BR-estimate} applies
on any finite $t$-interval.
\end{remark}

%---------------------------------------------------------------------------------------------------

\subsection{Einstein equations}

This paper stems from recent work on the vacuum Einstein equations
by Klainerman, Rodnianski, and Wang~\cite{KR,Wang},
in which breakdown criteria were proposed for regular solutions to the Einstein
equations. Building upon the strategy already adopted by Grant and
LeFloch~\cite{GrantLeFloch1, GrantLeFloch2} to deal with the
injectivity radius of Lorentzian manifolds, we required in the present paper
solely one-sided geometric bounds on the spacetime. Our method should in principle be applicable
to the Einstein equations.

Recall that in harmonic coordinates $x^\m$, satisfying by
definition $g^{\a\b}\Gamma^\m_{\a\b} =0$ where $\Gamma^\m_{\a\b}$ denote the Christoffel symbols,
 the vacuum Einstein equations take the form of
a tensorial system of nonlinear wave equations
$$
0 = R_{\a\b} = - \frac{1}{2} \, g^{\gamma\d} \del_\gamma \del_\d g_{\a\b} +
F_{\a\b}(\g, \del \g).
$$
The metric $g_{\a\b} = m_{\a\b} + u_{\a\b}$ is typically expressed as
a small perturbation $\bu$ of the Minkowski metric $\bm$, and
the reduced vacuum Einstein equations above are rewritten as a nonlinear wave
equation for $\bu$, i.e.
$$
\Box_m \bu = N(\bu, \del\bu, \del^2\bu),
$$
where $\Box_\bm := m^{\a\b} \del_\a \del_\b$, and the nonlinearity $N$
consists of terms of the form $F(\bu) \cdot \bu \cdot \del^2 \bu$
and $F(\bu) \cdot \del \bu \cdot \del \bu$. Therefore,
relative to wave coordinates, breakdown of smooth solutions can not occur on some interval $[t_0, t_1)$
(see~\cite[Sec.~6.3]{Hormander} and~\cite[Sec.~3]{KN})
as long as
\[
\int_{t_0}^{t_1} \| \del \g (t) \|_{L^\infty(\Hbf_t)} \, dt < \infty.
\]

On the other hand, by taking advantage of the structure of the Einstein equations,
Klainerman and Rodnianski~\cite{KR} were able to prove that,
for CMC (constant mean curvature) foliations,
smooth solutions exist beyond time $t_1$ if
\[
\sup_{t \in [t_0,t_1)} \Big( \| \bk \|_{L^{\infty}(\Hbf_t)}
+ \| \nablabf \log n \|_{L^{\infty}(\Hbf_t)} \Big)
< \infty.
\]
This result was improved by Wang~\cite{Wang}
who assumed an integral rather than a $\sup$-norm condition:
\[
\int_{t_0}^{t_1} \Big( \| \bk \|_{L^\infty(\Hbf_t)} + \| \nablabf \log n
\|_{L^\infty(\Hbf_t)} \Big) \, dt < \infty.
\]
 
Based on the results established in the present paper, especially our new inequality for the Bel--Robinson energy,
we conjecture that the breakdown criterion for the Einstein equations can be drastically weakened
and a one-sided bound on $n\bfpi$ should be sufficient, that is, 
with the notation in Theorem~\ref{BR-estimate}:
\[
\int_{t_0}^{t_1} \sup_{\Hbf_t} \left( \rho(n \bfpi) \right) \, dt < \infty.
\]
In particular, this conjecture could be investigated in the case of CMC foliations. This objective seems to be realistic, based on the additional observation (made here by the authors) that the following a priori bounds hold (in terms of the Bel--Robinson energy, only):   
$$
\aligned
\| \bfpi \|_{L^6 (\Hbf_t)} + \| \bD \bfpi \|_{L^2 (\Hbf_t)} &\lesssim 1.
\\
\| n^{-1} \|_{L^\infty (\Hbf_t)} + 
\| \nablabf^2 n \|_{L^2 (\Hbf_t)} + \| \nablabf^2 (n^{-1}) \|_{L^2 (\Hbf_t)} &\lesssim 1,
\\
\| \dotn \|_{L^2 (\Hbf_t)} + \| \nabla \dot n \|_{L^2 (\Hbf_t)} &\lesssim 1.
\endaligned
$$
Finally, we refer to Grant and LeFloch \cite{GrantLeFloch1,GrantLeFloch2} for the derivation of injectivity radius estimates for Lorentzian manifolds that enjoy a one--sided bound, only.  

%==================================================================

\section*{Acknowledgments}

The first author (AYB) was supported by research stipend FS~506/2010 of the University of Vienna,
START-project Y237--N13 and FWF-project P20525 of the Austrian Science Fund.
The second author (JDEG) was supported by START-project Y237--N13
of the Austrian Science Fund and the University of Vienna.

The third author (PLF) was supported by the Centre National de la Recherche
Scientifique (CNRS) and the Agence Nationale de la Recherche through the grants ANR 2006-2--134423 (Mathematical Methods in General Relativity) and ANR SIMI-1-003-01 (Mathematical General Relativity. Analysis and geometry of spacetimes with low regularity).  
PLF is also grateful to the Mathematical Science Research Institute (MSRI, Berkeley), where this paper was completed
in April 2011.

%==================================================================

\end{document}